\documentclass[11pt,final]{article}
\usepackage{graphicx} 
\usepackage[a4paper,margin=1in]{geometry}
\usepackage[utf8]{inputenc} 
\usepackage[T1]{fontenc}
\usepackage{newtxtext,newtxmath}
\usepackage[numbers]{natbib}
\usepackage[colorlinks=true, linkcolor=blue, citecolor=blue, urlcolor=blue]{hyperref}
\usepackage[final,nopatch=footnote]{microtype}
\usepackage{setspace}
\usepackage{enumitem}
\newcounter{hypcount}
\newenvironment{hyp}{
  \begin{enumerate}[
    label=\textbf{(H\arabic*)},
    ref=H\arabic*,
    leftmargin=*,
    align=left,
    itemsep=0.6ex,
    topsep=0.8ex
  ]
  \setcounter{enumi}{\value{hypcount}}
}{
  \setcounter{hypcount}{\value{enumi}}
  \end{enumerate}
}
\newcommand{\Hyp}[1]{\textup{(\ref{#1})}}
\usepackage{amsmath}

\usepackage{amsthm}
\usepackage{mathtools}
\usepackage[nameinlink,noabbrev]{cleveref}
\theoremstyle{plain}
\newtheorem{theorem}{Theorem}[section]
\newtheorem{lemma}[theorem]{Lemma}
\newtheoremstyle{dashremark}
  {6pt}{6pt}
  {\normalfont}
  {}
  {\bfseries\itshape}
  {}
  {1em}
  {\thmname{#1}\thmnumber{~\thetheorem}\thmnote{ \textup{(#3)}}}
\theoremstyle{dashremark}
\newtheorem{remark}[theorem]{Remark}
\crefname{lemma}{Lemma}{Lemmas}
\Crefname{lemma}{Lemma}{Lemmas}
\crefname{remark}{Remark}{Remarks}
\Crefname{remark}{Remark}{Remarks}

\usepackage{subcaption}
\usepackage{booktabs}
\usepackage{multirow}
\usepackage{adjustbox} 
\usepackage{float}
\usepackage{makecell}
\usepackage{algorithm}
\usepackage[noend]{algpseudocode} 
\algrenewcommand\algorithmicrequire{\textbf{Input:}}
\algrenewcommand\algorithmicensure{\textbf{Output:}}
\algrenewcommand\algorithmiccomment[1]{\hfill$\triangleright$~#1}
\algnewcommand{\LineComment}[1]{\State $\triangleright$~#1}
\usepackage{caption}
\title{Stochastic Policy Gradient Methods in the Uncertain Volatility Model}
\author{
Lokman A. Abbas-Turki\thanks{Université Paris Cité and Sorbonne Université, Laboratoire de Probabilités, Statistique et Modélisation (LPSM, UMR CNRS 8001), Paris, France. \texttt{lokmane.abbas\_turki@sorbonne-universite.fr}} 
\and
Jean-François Chassagneux\thanks{ENSAE-CREST and Institut Polytechnique de Paris, Paris, France. \texttt{jean-francois.chassagneux@ensae.fr}. This research benefited from the support of the ``Chaire Futures of Quantitative Finance''.}
\and
Jean-Philippe Lemor\thanks{BNP Paribas Global Markets, Paris, France. \texttt{jean-philippe.lemor@bnpparibas.com}} 
\and Grégoire Loeper\thanks{BNP Paribas Global Markets, Paris, France. \texttt{gregoire.loeper@bnpparibas.com}} 
\and Simon Sananes\thanks{BNP Paribas Global Markets, Université Paris Cité and Sorbonne Université, Laboratoire de Probabilités, Statistique et Modélisation (LPSM, UMR CNRS 8001), Paris, France. \texttt{sananes@lpsm.paris}} 
}

\date{\today}

\begin{document}

\maketitle

\begin{abstract}

The multidimensional Uncertain Volatility Model leads to robust option pricing problems under joint volatility and correlation uncertainty. Their numerical resolution quickly becomes challenging because the associated stochastic control problem is high-dimensional. We propose a backward actor--critic stochastic policy gradient scheme tailored to this setting. The method combines a discrete dynamic programming principle with Proximal Policy Optimization and shallow neural-network approximations of both the value function and the control policy. A key ingredient is the policy parameterization: continuous controls are represented through a squashed Gaussian policy built on a C-vine representation of correlation matrices, which enforces positive semidefiniteness by construction. Numerical experiments on a range of multidimensional derivatives show that the method yields accurate prices, remains computationally efficient, and compares favorably with existing Monte Carlo and machine-learning-based benchmarks for robust pricing in the Uncertain Volatility Model.

\end{abstract}

\noindent\textbf{Keywords:} Uncertain volatility model; robust option pricing; stochastic control; actor--critic methods; proximal policy optimization; reinforcement learning; Monte Carlo methods.

\section{Introduction}\label{sec:intro}

This work addresses the problem of numerically pricing high-dimensional European options in the \emph{Uncertain Volatility Model} (UVM). In contrast to the classical Black--Scholes framework, where volatilities and correlations are specified once and for all, the UVM, introduced by \cite{ALP95}, postulates that the instantaneous covariance structure of the assets lies in a given bounded set. This bounded set can be interpreted as a confidence region for the whole covariance structure, and in this setting one typically seeks the most robust, conservative price. The diffusion driving the underlying assets is thus seen as a controlled state process, and the option price becomes the value function of a stochastic control problem. Pricing high-dimensional options in this framework suffers from the usual limitations of the \emph{curse of dimensionality}. To address this challenge, \cite{GH11} introduce two Monte Carlo-based algorithms, and more recently \cite{GMZ24} expand on this work by leveraging machine learning techniques, achieving high accuracy in several high-dimensional examples. Building on these two works, we develop a backward actor--critic stochastic policy gradient scheme for the multidimensional UVM. The method proceeds backward in time, learning alternately the optimal policy (actor) and value function (critic) with shallow neural networks. Volatilities and correlations are sampled from stochastic policies built on a construction that guarantees positive semidefiniteness while enforcing volatility and correlation bounds. The actor is updated via a clipped Proximal Policy Optimization (PPO) surrogate objective~\cite{Sch+17}, which enables multiple minibatch updates per time step while maintaining training stability.

Our main contributions are:
\begin{itemize}
    \item a backward actor--critic stochastic policy gradient scheme for the multidimensional UVM, combining discrete dynamic programming with PPO, which delivers accurate prices with moderate runtimes across a range of payoffs and dimensions, and compares favorably with existing Monte Carlo and machine-learning baselines;
    \item a policy parameterization for volatilities and correlations within a stochastic policy gradient framework, in which positive semidefiniteness is enforced by construction through a C-vine representation of correlation matrices and a smooth squashing map, avoiding heavy operators such as projections or repeated Cholesky checks;
    \item a detailed practical adaptation of PPO to the stochastic control setting of the UVM, including the treatment of the correlation constraints and the exploration--exploitation trade-off.
\end{itemize}

In many cases, the value function associated with a stochastic control problem solves a parabolic, fully nonlinear PDE. Building on the connection between second-order backward stochastic differential equations (2BSDEs) and fully nonlinear PDEs established in \cite{CSTV07}, a number of Monte Carlo-type schemes have been proposed to tackle such equations in relatively high dimensions. In particular, \cite{FTW11} introduce a probabilistic time discretization scheme in which the nonlinear operator is evaluated through Monte Carlo approximations of conditional expectations; their method satisfies consistency, monotonicity, and stability properties, ensuring convergence to the viscosity solution. The approach was later refined in several directions: \cite{GH11} adapt the 2BSDE-based methodology to the specific fully nonlinear equation arising in the UVM, while \cite{GZZ15} relax some of the structural constraints imposed in \cite{FTW11} to guarantee monotonicity. In these works, conditional expectations are typically approximated via linear regression techniques (see, e.g., \cite{GLW05}), and spatial derivatives are computed using weighted representations based on Malliavin calculus or Gaussian identities. To mitigate the difficulty of constructing suitable regression bases in high dimensions, neural network approximations have been introduced: \cite{PWG21} develop backward schemes in which the value function and its gradient are parameterized by neural networks, with the Hessian obtained through automatic differentiation. Earlier seminal contributions include the \emph{Deep BSDE} method of \cite{EHJ17, HJE18} for semilinear equations and its extension to the fully nonlinear setting in \cite{BEJ19}. A different class of approaches is represented by the \emph{Deep Galerkin} method \cite{SS18} and \emph{physics-informed} neural networks \cite{RPK19}, which approximate the PDE solution directly by a neural network and enforce the equation by minimizing its residual over randomly sampled space-time points.

The PDE arising in the UVM belongs to the class of Hamilton--Jacobi--Bellman (HJB) equations, whose nonlinearity involves a supremum over admissible controls. When this supremum admits a closed form, one can apply PDE--2BSDE schemes directly; for instance, \cite{BEJ19} solve two HJB equations, including one from the UVM, in this way. In general, however, the supremum does not simplify, and one must solve a nontrivial inner maximization at each time step, potentially degrading both accuracy and efficiency. In their numerical experiments, \cite{GZZ15} highlight the computational burden induced by this Hamiltonian maximization when solving a ten-dimensional HJB equation under uncertain volatility. This difficulty has motivated methods that exploit the stochastic control representation of the value function directly, rather than its PDE or 2BSDE formulation, by parameterizing the control and optimizing over a finite-dimensional parameter space. In addition to their 2BSDE-based scheme, \cite{GH11} propose such a control parameterization strategy. The idea of replacing regression bases with neural networks naturally extends to this setting: \cite{HE16} introduce a neural network representation of the optimal control and reformulate the problem as the minimization of a global loss function, while \cite{LLP23} incorporate a differential learning technique to improve the approximation of both the value function and its derivatives. In contrast to these global optimization strategies, \cite{HPBL21, BHLP22} propose backward methods based on dynamic programming, extending the parametric approach of \cite{GH11} to more general stochastic control problems and higher-dimensional settings.

The stochastic control formulation of the value function has also inspired the use of reinforcement learning (RL) methodologies; we refer the reader to~\cite{SB98} for an introduction to RL, and to~\cite{WZZ20, JZ22a, JZ22b} for a theoretical study of the connections between continuous-time stochastic control and RL. Within RL, policy gradient (PG) methods optimize parameterized controls by differentiating the expected performance criterion with respect to the policy parameters, with the gradient typically estimated via Monte Carlo sampling. Actor--critic algorithms further enhance this strategy by coupling the policy (actor) with an approximation of the value function (critic), thereby reducing variance and stabilizing the optimization. Some of the algorithms in \cite{HPBL21, BHLP22} can be seen as actor--critic methods with deterministic controls parameterized by deep neural networks. Going further, one may parameterize \emph{stochastic} policies, leading to stochastic PG methods. Stochastic policies can improve optimization robustness by promoting exploration of the control space and mitigating premature convergence. This approach is adopted by~\cite{HHP23} for a stochastic control problem with exit time in a model-free setting, and by~\cite{PW25} for mean-field control problems.

Despite the progress described above, existing methods face specific limitations when applied to the multidimensional UVM. PDE--2BSDE schemes require solving the inner Hamiltonian maximization at each time step, which is costly and may degrade accuracy, especially in high dimension. Global control parameterization methods, while avoiding this inner optimization, learn a single set of parameters over all time steps, which may limit their ability to capture time-varying optimal controls. Backward dynamic programming methods address this issue but have so far relied on deterministic control parameterizations. None of the existing approaches combines a backward dynamic programming structure with stochastic policy gradient updates and a policy class specifically designed to enforce the structural constraints of the UVM. The present work fills this gap.

The paper is organized as follows. \Cref{sec:UVM} recalls the mathematical framework of the UVM. \Cref{sec:algo} introduces the numerical scheme. \Cref{sec:numerics} reports the numerical experiments.

\paragraph{Notations.} Throughout this work, we use the following notations. Let $d\geq 1$ be a positive integer and $T\in(0,+\infty)$.
\begin{itemize}
\item Let $M\in\mathbb{R}^{d\times d}$ be a square matrix of size $d$, and denote by $(M^{ij})_{1\leq i,j\leq d}$ its coefficients. We denote by $\mathcal{S}^d\subset \mathbb{R}^{d\times d}$ the set of symmetric matrices of size $d$, and by $\mathcal{S}^d_+\subset\mathcal{S}^d$ (resp. $\mathcal{S}^d_{++}\subset\mathcal{S}^d$) the set of positive semidefinite (resp. positive definite) matrices of size $d$. If $M\in\mathcal{S}^d_+$, we write $M^{1/2}$ for the unique matrix $S\in\mathcal{S}^d_+$ such that $SS=M$; $M^{1/2}$ is called the matrix square root of $M$. We also denote by $\mathcal{E}^d$ the set of correlation matrices, that is the subset of matrices of $\mathcal{S}^d_+$ with unit diagonal.
\item Given a vector $v\in\mathbb{R}^d$, $\operatorname{diag}(v)$ denotes the diagonal matrix of size $d$ with diagonal entries given by $v$. Conversely, given a matrix $M$ of size $d$, $\operatorname{diag}(M)$ denotes the element of $\mathbb{R}^d$ obtained by extracting the main diagonal of $M$.
\item By $\odot$ we denote the element-wise (Hadamard) product between two vectors or two compatible matrices.
\item When $E$ is a (Borel) measurable subset of $\mathbb{R}^d$, $\mathcal{P}(E)$ denotes the set of probability measures on $E$ and $\mathcal{B}(E)$ its Borel $\sigma$-algebra.
\end{itemize}

\section{Uncertain Volatility Model}\label{sec:UVM}

To motivate the framework, we briefly recall the main ideas of~\cite{ALP95} in the one-dimensional setting; the rigorous multidimensional formulation is given in~\Cref{subsec:uvm-model}.

In~\cite{ALP95}, the authors consider the set of risk-neutral probability measures $\mathbb{P}$ under which the risky asset $X$ has dynamics
\[
dX_s = r X_s \, ds + \sigma_s X_s \, dW_s, \quad \forall s \in [0,T],
\]
for $W$ a Brownian motion, $r \geq 0$ a risk-free rate and $\sigma = (\sigma_s)_{s \in [0,T]}$ a process valued in $[\sigma_{\min}^1, \sigma_{\max}^1]$. The seller of a claim then considers the worst-case pricing and hedging problem over such volatility scenarios, leading to a robust price that is insensitive to volatility misspecification within these bounds. Given a European payoff $g(X_T)$,
\cite{ALP95} formulate the corresponding robust, upper no-arbitrage price
bound as
\begin{equation}\label{eq:alp-v}
V_t\coloneqq\sup_{\mathbb{P}}\mathbb{E}^{\mathbb{P}}_t
\left[\mathrm{e}^{-r(T-t)}g(X_T)\right],
\end{equation}
where the supremum is taken over all probability measures $\mathbb{P}$ such that the dynamics of $X$ holds for some volatility process valued in $[\sigma_{\min}^1,\sigma_{\max}^1]$, and $\mathbb{E}^{\mathbb{P}}_t$ denotes
the conditional expectation operator under $\mathbb{P}$, conditional on the information up to time $t$. Moreover, \cite{ALP95} show that $V_t$ coincides with the superreplication price: it is the smallest initial
capital for which there exists a self-financing strategy whose terminal wealth dominates $g(X_T)$ almost surely under every admissible volatility scenario.

This robust pricing approach naturally extends to multi-asset derivatives. In higher dimensions, one imposes bounds not only on volatilities but also on correlations, consistently with the requirement that the resulting correlation matrix remain positive semidefinite. From a financial point of view, this yields the most conservative valuation compatible with the prescribed bounds in markets where volatility and correlation are not directly traded and are only imperfectly observed. A rigorous stochastic control formulation of the multidimensional UVM is developed and analyzed in~\cite{GV02}.

\subsection{Continuous-time Uncertain Volatility Model}\label{subsec:uvm-model}

Throughout, we let $d\geq 1$ denote the dimension (number of risky assets), $T\in(0,+\infty)$ a finite time horizon, and $r\in[0,+\infty)$ a risk-free rate. Let $X=(X^1,\dots,X^d)^\top$ be a $d$-dimensional risky asset. The UVM assumes that $X$ has dynamics 
\begin{equation}\label{eq:UVM-multi-dynamics}
    dX_s^\alpha=rX_s^\alpha ds+\operatorname{diag}\left(X_s^\alpha\right)\alpha_sdW_s,\qquad\forall s\in[0,T],
\end{equation}
under a risk-neutral pricing measure, for $W=(W_s)_{s\in[0,T]}$ a Brownian motion valued in $\mathbb{R}^d$. The matrix process $\alpha$ is deemed admissible if it is progressively measurable and valued in the compact set 
\begin{equation}\label{eq:set-A}
A\coloneqq\Big\{a\in\mathbb{R}^{d\times d}\,:\,aa^\top=\operatorname{diag}(\sigma)\rho\operatorname{diag}(\sigma),\, \sigma\in\prod_{i=1}^d\big[\sigma_{\min}^i,\sigma_{\max}^i\big],\,\rho\in\mathcal{E}^d,\,\rho^{ij}\in\big[\rho^{ij}_{\min},\rho^{ij}_{\max}\big]\Big\},
\end{equation}
which consists of \emph{volatility matrices}: for any $a\in A$, the instantaneous covariance matrix is $a a^\top$. In this work, we always assume:
\begin{hyp}
\item \label{H1} the volatility and correlation bounds satisfy $0<\sigma^i_{\min}\leq\sigma^i_{\max}<+\infty,\quad\forall 1\leq i\leq d$ and $-1\leq\rho^{ij}_{\min}\leq\rho_{\max}^{ij}\leq1,\quad\forall 1\leq i<j\leq d$,
\item \label{H2} the set $A$ is non-empty.
\end{hyp}
Note that~\Hyp{H2} requires the correlation bounds to satisfy some compatibility condition: the set of matrices $\rho$ belonging to $\mathcal{E}^d$ and such that $\rho^{ij}\in[\rho^{ij}_{\min},\rho^{ij}_{\max}]$ can be empty. For any $a\in A$, the associated volatility vector $\sigma$ and correlation matrix $\rho$ are uniquely recovered from $aa^\top$ through
\[
\sigma^i=\sqrt{(aa^\top)^{ii}},\qquad
\rho^{ij}=\frac{(aa^\top)^{ij}}{\sigma^i\sigma^j},\qquad 1\le i<j\le d.
\]
Since $\sigma^i_{\min}>0$ by~\Hyp{H1}, this is well defined.

Recall the robust price formulation~\eqref{eq:alp-v} of~\cite{ALP95}, in which the supremum is taken over a family of admissible pricing measures. This family is in general nondominated: there need not exist a reference measure with respect to which all admissible measures are absolutely continuous, and admissible measures may even be mutually singular. A rigorous formulation of the UVM therefore requires a framework adapted to model uncertainty. In dimension one, this can be done in the quasi-sure setting of~\cite{DM06}. For the multidimensional UVM, we follow instead the weak formulation of~\cite{GV02}, based on reference probability systems, which is equivalent to the corresponding superreplication problem. Following~\cite{GV02}, we reformulate the problem by introducing for $t\in[0,T)$ the following \emph{reference probability systems} 
\[
{\nu_t}=\big(\Omega,(\mathcal{F}_s)_{s\in[t,T]},\mathbb{P},W\big),
\]
where $\big(\Omega,\mathcal{F}_T,\mathbb{P}\big)$ is a probability space, $(\mathcal{F}_s)_{s\in[t,T]}$ is a filtration on $\big(\Omega,\mathcal{F}_T,\mathbb{P}\big)$ and $W$ is a Brownian motion valued in $\mathbb{R}^d$ and adapted to $(\mathcal{F}_s)_{s\in[t,T]}$. We write $\mathbb{E}^{{\nu_t}}$ for the expectation under $\mathbb{P}$ from ${\nu_t}$. For any $t\in[0,T)$ and reference probability system ${\nu_t}$, the set of admissible control processes in the UVM is denoted by
\[
\mathcal{A}_ {{\nu_t}}\coloneqq\big\{\alpha=(\alpha_s)_{s\in[t,T]} : \alpha\text{ is defined on }{\nu_t}, \text{ progressively measurable and valued in } A \big\},
\]
and it is non-empty under~\Hyp{H2}. The coefficients of~\eqref{eq:UVM-multi-dynamics} are linear in the state and control variables, independent of time, and $A$ is compact. Therefore, for any initial condition $(t,x)\in[0,T)\times(0,+\infty)^d$, reference probability system ${\nu_t}$ and admissible control $\alpha\in\mathcal{A}_{{\nu_t}}$, the stochastic differential equation~\eqref{eq:UVM-multi-dynamics} admits a unique strong solution on ${\nu_t}$ (see~\cite{FS93}, Chapter IV.2). We denote this solution by $X^{\alpha,t,x}$, with the convention that $X^{\alpha,t,x}_s=x$ for $s\in[0,t]$.

Let us now consider a European vanilla option with payoff function $g$, maturity $T$ and written on the underlying $X^\alpha$. Let also $x_0\in(0,+\infty)^d$ be the price of the risky asset at time $0$. We assume:
\begin{hyp}
\item \label{H3} the function $g:(0,+\infty)^d\to \mathbb{R}$ is continuous with polynomial growth.
\end{hyp}
First, we define $V_{{\nu_t}}$ as the supremum over all prices in ${\nu_t}$:
\[
\begin{cases}
    V_{{\nu_t}}(t,x)\coloneqq\sup_{\alpha\in\mathcal{A}_{{\nu_t}}}\mathbb{E}^{{\nu_t}}\left[\mathrm{e}^{-r(T-t)}g\big(X_T^{\alpha,t,x}\big)\right],&\quad\forall(t,x)\in[0,T)\times(0,+\infty)^d,\\
    V_{{\nu_t}}(T,x)\coloneqq g(x),&\quad\forall x\in(0,+\infty)^d,
\end{cases}
\]
which is well-defined under~\Hyp{H3}, since the coefficients of~\eqref{eq:UVM-multi-dynamics} are linear and the control set $A$ is compact (\cite{FS93}, Chapter IV.2). Then, the value function $V$ is the supremum of $V_{{\nu_t}}$ over all reference probability systems ${\nu_t}$ defined above:
\[
\begin{cases}
    V(t,x)\coloneqq\sup_{{\nu_t}} V_{{\nu_t}}(t,x),&\quad\forall(t,x)\in[0,T)\times(0,+\infty)^d,\\
    V(T,x)\coloneqq g(x),&\quad\forall x\in(0,+\infty)^d,
\end{cases}
\]
and is well-defined for the same reasons as $V_{{\nu_t}}$. The weak stochastic control problem defined above is equivalent to the superreplication problem of~\cite{ALP95} (see~\cite{GV02}). In particular, the superreplication price of the claim at time $0$ coincides with $V(0,x_0)$.

\subsection{Black--Scholes--Barenblatt equation}

We now recall the PDE characterization of the value function, mainly to connect the weak stochastic control formulation above with the classical HJB viewpoint, and to motivate the structural remarks made below in low dimension. This PDE perspective is not directly used in the implementation of our numerical scheme, which is instead based on the discrete-time control formulation introduced in~\Cref{subsec:discrete-UVM}. 

Theorem 7 in~\cite{GV02} states that $V$ is the unique viscosity solution with polynomial growth of the following HJB equation:
\begin{equation}\label{eq:BSB-multi}
\begin{cases}
    \partial_tV(t,x) + r\left(x^\top\nabla_xV(t,x) - V(t,x)\right) + \frac{1}{2}\sup_{a\in A} L(x,\nabla^2_xV(t,x),a) = 0,&\text{in }
    [0,T)\times(0,+\infty)^d,\\
    V(T,x)=g(x)&\text{in }(0,+\infty)^d,
\end{cases}
\end{equation}
where for any $(x,\gamma,a)\in(0,+\infty)^d\times\mathcal{S}^d\times A$, 
\begin{align*}
L(x,\gamma,a)
&\coloneqq \operatorname{Tr}\left[\gamma\,\operatorname{diag}(x)\,aa^\top\,\operatorname{diag}(x)\right]\\
&=\operatorname{Tr}\left[\gamma\,\operatorname{diag}(x\odot\sigma)\,\rho\,\operatorname{diag}(x\odot\sigma)\right]\\
&=\sum_{i,j=1}^d x^i x^j \sigma^i \sigma^j \rho^{ij}\gamma^{ij}.
\end{align*}
Equation~\eqref{eq:BSB-multi} is a fully nonlinear parabolic PDE, commonly referred to as the Black--Scholes--Barenblatt (BSB) equation. Under the stronger assumptions
\begin{hyp}
\item \label{H4} the function $g$ is Lipschitz continuous and such that $g$ and $\nabla_xg$ (which is defined almost everywhere since $g$ is Lipschitz continuous) have polynomial growth almost everywhere,
\item \label{H5} for every $a\in A$, the matrix $aa^\top$ is positive definite,
\end{hyp}
Theorem 11 in~\cite{GV02} implies that the unique viscosity solution $V$ of~\eqref{eq:BSB-multi} belongs to the parabolic Hölder space $\mathcal{C}^{2+\alpha}\big([0,T)\times(0,+\infty)^d\big)$. We recall this additional regularity only to invoke the classical verification theorem (\cite{FS93}, Chapter III.8), which yields the pointwise characterization of an optimal control:
\[
\alpha^\star(t,x)\in\arg\max_{a\in A}L(x,\nabla_x^2V(t,x),a).
\]
The existence of a measurable selector $\alpha^\star$ follows from the compactness of $A$ and the measurable selection theorem of~\cite{FR75}, Appendix~B.

\paragraph{Dimension one.}
In the one-dimensional asset case
\[
dX_s^\alpha=rX_s^\alpha ds+\alpha_sX^\alpha_sdW_s,
\]
the control is the volatility of the risky asset and is valued in $A=[\sigma^1_{\min}, \sigma_{\max}^1]$. We have the simpler expression $L(x,\gamma,a)=x^2\,a^2\,\gamma$, and the maximizer is explicit:
\begin{equation}\label{eq:sigma-star-1d}
\arg\max_{a\in A}L(x,\gamma,a)=\sigma_{\min}^1\mathbf{1}_{\gamma<0}+\sigma_{\max}^1\mathbf{1}_{\gamma\geq 0},
\end{equation}
resulting in the following one-dimensional BSB equation
\[
\begin{cases}
\partial_tV(t,x)+r\left(x\partial_xV(t,x)-V(t,x)\right)\\+\;\frac{1}{2}x^2\left[(\sigma^1_{\max})^2\left(\partial_{xx}^2V(t,x)\right)^+-(\sigma^1_{\min})^2\left(\partial_{xx}^2V(t,x)\right)^-\right]=0&\text{in }[0,T)\times(0,+\infty),\\
V(T,x)=g(x)&\text{in }(0,+\infty).
\end{cases}
\]
Note that in this case the optimal volatility is bang-bang, i.e., valued in $\{\sigma_{\min}^1,\sigma_{\max}^1\}$, and \Hyp{H5} is equivalent to having $\sigma^1_{\min}>0$, which is always satisfied by~\Hyp{H1}.

\paragraph{Dimension two.}

In dimension two, any correlation coefficient
\[
\rho^{12}\in[\rho_{\min}^{12},\rho_{\max}^{12}]\subset[-1,1]
\]
defines a valid correlation matrix
\[
\rho=
\begin{pmatrix}
1 & \rho^{12}\\
\rho^{12} & 1
\end{pmatrix}.
\]
Hence, for any admissible volatility vector $\sigma=(\sigma^1,\sigma^2)$, the associated instantaneous covariance matrix is
\[
aa^\top=\operatorname{diag}(\sigma)\rho\operatorname{diag}(\sigma)
=
\begin{pmatrix}
(\sigma^1)^2 & \rho^{12}\sigma^1\sigma^2\\
\rho^{12}\sigma^1\sigma^2 & (\sigma^2)^2
\end{pmatrix}.
\]
A convenient representative $a\in A$ is, for instance, the lower-triangular factor
\[
a=
\begin{pmatrix}
\sigma^1 & 0\\
\rho^{12}\sigma^2 & \sqrt{1-(\rho^{12})^2}\sigma^2
\end{pmatrix}.
\]
Therefore,
\[
L(x,\gamma,a)
=
(\sigma^1x^1)^2\,\gamma^{11}
+2\rho^{12}\,\sigma^1\,\sigma^2\,x^1\,x^2\,\gamma^{12}
+(\sigma^2x^2)^2\,\gamma^{22},
\]
and the optimal correlation is bang-bang:
\begin{equation}\label{eq:opt-correl}
\rho^{12,\star}=\rho_{\min}^{12}\mathbf{1}_{\gamma^{12}<0} +\rho_{\max}^{12}\mathbf{1}_{\gamma^{12}\geq 0}.
\end{equation}
In contrast, the optimal volatilities are not bang-bang in general; they depend on the Hessian through a nontrivial quadratic optimization. Substituting $\rho^{12,\star}$ into $L(x,\gamma,a)$, the maximization problem becomes a two-dimensional quadratic form maximization under a double inequality constraint, whose explicit solution is given in \cite{GH11}. Note that~\Hyp{H5} is satisfied as soon as $|\rho^{12}|<1$ (having in mind that $\sigma^i_{\min}>0$ for all $1\leq i \leq d$ by~\Hyp{H1}), so a sufficient condition for~\Hyp{H5} to be satisfied is $-1<\rho^{12}_{\min}<\rho^{12}_{\max}<1$. 

\subsection{Discrete-time Uncertain Volatility Model}\label{subsec:discrete-UVM}

Let $g$ be a vanilla European payoff. To prove~\Cref{lem:dDPP}, we assume in this subsection, in addition to~\Hyp{H3}, that
\begin{hyp}
    \item \label{H6} the function $g$ is nonnegative,
\end{hyp}
which is satisfied by all payoffs considered in the paper.

Let
\begin{equation}\label{eq:time-discretization}
\{0\eqqcolon t_0<t_1<\dots<t_{N-1}<t_N\coloneqq T\},
\quad
t_n=n\tfrac{T}{N}
\end{equation}
be a uniform time discretization of the interval $[0,T]$, with $N\geq 1$ time-steps. We model our stochastic control problem as a discrete-time Markov control process with state $\mathcal{X}$ belonging 
to the state space $(0,+\infty)^d$, actions $a$ taken in the 
action set $A$, and transition law
\[
\begin{cases}
\mathcal{X}_0 = x_0 \in (0,+\infty)^d,\\
\mathcal{X}_{n+1} = F(\mathcal{X}_n, a_n, \xi_n), 
\quad \forall\, 0 \leq n \leq N-1,
\end{cases}
\]
where $(\xi_n)_{0\leq n\leq N-1}$ is a sequence of i.i.d. random variables with distribution $\mathcal{N}(0,I_d)$ and 
$F$ is the log-Euler scheme defined by
\begin{equation}\label{eq:log-euler}
F(x,a,\xi) \coloneqq x \odot \exp\Big\{\big(r - \tfrac{1}{2}
\operatorname{diag}(aa^\top)\big)\tfrac{T}{N} 
+ a\sqrt{\tfrac{T}{N}}\,\xi\Big\}.
\end{equation}

The state $\mathcal{X}$ is controlled by means of \emph{stochastic Markov policies}, which are sequences $\pi=(\pi_n)_{0\leq n\leq N-1}$ of stochastic kernels $\pi_n$ on $A$ given $(0,+\infty)^d$. That is, $\pi_n$ is a function such that $\pi_n(\cdot\mid x)$ is a probability measure on $A$ for each $x\in(0,+\infty)^d$, and $\pi_n(B\mid\cdot)$ is a measurable function on $(0,+\infty)^d$ for each $B\in\mathcal{B}(A)$. We write $\Pi$ for the set of stochastic kernels on $A$ given $(0,+\infty)^d$. Given $0\leq n\leq N-1$, we write $\Pi^{N-n}$ for the set of policies $\pi=(\pi_k)_{n\leq k\leq N-1}$ of length $N-n$, with $\Pi^1\equiv \Pi$. We call $\pi\in\Pi^{N-n}$ a \emph{deterministic} Markov policy when there exists a sequence of measurable $A$-valued functions $(\alpha_k)_{n\leq k\leq N-1}$ such that for all $x\in(0,+\infty)^d$ and $n\leq k\leq N-1$, $\pi_k(\cdot\mid x)$ is a Dirac measure concentrated at $\alpha_k(x)$. Finally, given a policy $\pi\in\Pi^{N-n}$, we will write $\mathcal{X}^{\pi,n,x}$ to denote the state $\mathcal{X}$ starting at time $t_n$ from the initial position $\mathcal{X}_{n}=x$, and controlled by $\pi$. For the sake of clarity, let us specify that the state $\mathcal{X}^{\pi,n,x}$ being controlled by $\pi$ means that at each step $k\geq n$, it evolves according to
\[
\mathcal{X}^{\pi,n,x}_{k+1} = F\big(\mathcal{X}^{\pi,n,x}_k, a_k, \xi_k\big),
\]
where the action $a_k\in A$ is \emph{drawn} according to the probability measure $\pi_k\big(\cdot\mid \mathcal{X}^{\pi,n,x}_k\big)$.

Following the canonical construction of~\cite{HLL96}, Chapter 2.2, we let $(\Omega^N,\mathcal{F}^N)$ be the measurable space consisting of the canonical sample space $\Omega^N\coloneqq\big((0,+\infty)^d\big)^{N+1}\times A^N$ and $\mathcal{F}^N$ the corresponding product $\sigma$-algebra. Given any $\pi\in\Pi^N$, the Ionescu-Tulcea theorem (\cite{HLL96}, Appendix C) yields the existence of a unique probability measure $\mathbb{P}^\pi$ on $(\Omega^N,\mathcal{F}^N)$ such that $\forall 0\leq n\leq N-1$,
\begin{align*}
&\mathbb{P}^\pi\big(a_n\in C \mid \mathcal{X}_0^\pi,a_0,\mathcal{X}_1^\pi,a_1,\dots,\mathcal{X}_{n-1}^\pi,a_{n-1},\mathcal{X}_n^\pi\big)=\pi_n\big(C\mid \mathcal{X}_n^\pi\big),\quad\forall C\in\mathcal{B}(A),\\
&\mathbb{P}^\pi\big(\mathcal{X}_{n+1}^\pi\in E\mid\mathcal{X}_0^\pi,a_0,\mathcal{X}_1^\pi,a_1,\dots,\mathcal{X}_{n-1}^\pi,a_{n-1},\mathcal{X}_n^\pi,a_n)=\mathbb{E}^\xi\big[\mathbf{1}_{F(\mathcal{X}_n^\pi,a_n,\xi)\in E}\big],\quad\forall E\in\mathcal{B}\big((0,+\infty)^d\big).
\end{align*}
We denote by $\mathbb{E}^\pi$ the expectation under $\mathbb{P}^\pi$. In what follows, $\xi$ denotes a generic random variable with distribution $\mathcal{N}(0,I_d)$. We denote by $\mathbb{E}^\xi$ expectation with respect to the 
Gaussian law $\mathcal{N}(0,I_d)$, that is, for any measurable 
function $\phi$ for which the integral is well defined,
\[
\mathbb{E}^\xi[\phi(\xi)]
\coloneqq
\int_{\mathbb{R}^d}\phi(z)\,
\frac{\mathrm{e}^{-\|z\|^2/2}}{(2\pi)^{d/2}}\,dz.
\]
Our goal is to compute $\mathcal V_0^\star(x_0)$, where the value functions $(\mathcal V_n^\star)_{0\leq n\leq N}$ are defined by
\begin{equation}\label{eq:Vn}
\begin{cases}
    \mathcal{V}_N^\star(x)\coloneqq g(x),\quad&\forall x\in(0,+\infty)^d,\\
    \mathcal{V}_n^\star(x)\coloneqq\sup_{\pi\in\Pi^{N-n}}\mathbb{E}^\pi\left[\mathrm{e}^{-r(T-t_n)}g\big(\mathcal{X}_N^{\pi,n,x}\big)\right],\quad&\forall x\in(0,+\infty)^d,\;\forall 0\leq n\leq N-1.
\end{cases}
\end{equation}

To compute $\mathcal{V}_0^\star(x_0)$, we will make use of~\Cref{lem:dDPP} below, which states the discrete-time dynamic programming principle (DPP) for the value functions~\eqref{eq:Vn}. 

\begin{lemma}[Discrete DPP]\label{lem:dDPP}
The sequence of value functions $(\mathcal{V}_n^\star)_{0\leq n\leq N}$ defined in~\eqref{eq:Vn} satisfies
\begin{equation}\label{eq:dDPP}
\mathcal{V}_n^\star(x) = \max_{a\in A}\mathbb{E}^\xi\left[\mathrm{e}^{-r\frac{T}{N}}\mathcal{V}_{n+1}^\star\big(F(x,a,\xi)\big)\right],\quad\forall x\in (0,+\infty)^d,\;\forall 0\leq n\leq N-1.
\end{equation}
Moreover, there exists a sequence $(\alpha_n^\star)_{0\leq n\leq N-1}$ of measurable functions such that the maximum in~\eqref{eq:dDPP} is attained at $\alpha^\star_n(x)$ for each $x\in (0,+\infty)^d$, i.e.,
\[
\mathcal{V}_n^\star(x) = \mathbb{E}^\xi\left[\mathrm{e}^{-r\frac{T}{N}}\mathcal{V}_{n+1}^\star\big(F(x,\alpha^\star_n(x),\xi)\big)\right],\quad\forall x\in (0,+\infty)^d,\;\forall 0\leq n\leq N-1.
\] 
\end{lemma}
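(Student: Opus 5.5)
We will argue by backward induction on $n$, proving jointly that $\mathcal{V}_n^\star$ is continuous with polynomial growth, nonnegative, that it satisfies~\eqref{eq:dDPP}, and that a measurable maximizer $\alpha_n^\star$ exists. The backward induction and the measurable-selection step are taken from the finite-horizon theory of Markov control processes in~\cite{HLL96}, Chapter 3. At $n=N$ we have $\mathcal{V}_N^\star=g$, which is continuous, nonnegative and of polynomial growth by~\Hyp{H3} and~\Hyp{H6}.

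For the inductive step, suppose $\mathcal{V}_{n+1}^\star$ has these properties and set
\[
Q_n(x,a)\coloneqq\mathbb{E}^\xi\big[\mathrm{e}^{-r\frac{T}{N}}\mathcal{V}_{n+1}^\star(F(x,a,\xi))\big].
\]

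\textbf{Step 1 (regularity of $Q_n$).} The map $F$ in~\eqref{eq:log-euler} is continuous in $(x,a)$. Moreover, $|F(x,a,\xi)|^p\le C|x|^p\mathrm{e}^{pc\|\xi\|}$ uniformly for $a$ in the compact set $A$. Polynomial growth of $\mathcal{V}_{n+1}^\star$ therefore gives an integrable Gaussian dominating function, locally uniformly in $x$. By dominated convergence, $Q_n$ is jointly continuous on $(0,+\infty)^d\times A$, of polynomial growth in $x$ uniformly in $a$, and nonnegative.

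\textbf{Step 2 (attained maximum and selector).} Because $A$ is compact and $Q_n(x,\cdot)$ is continuous, the maximum $\max_{a\in A}Q_n(x,a)$ is attained. The function $x\mapsto\max_aQ_n(x,a)$ is continuous by Berge's maximum theorem and keeps polynomial growth. A measurable selector $\alpha_n^\star$ exists by the measurable selection theorem (e.g.\ \cite{FR75}, Appendix~B, or \cite{HLL96}, Appendix D), which applies since $Q_n$ is Carathéodory and $A$ is compact. This selector is the one claimed in the lemma.

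\textbf{Step 3 (identification with the supremum over policies).} This is the main step.

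For the inequality ``$\le$'', take any $\pi=(\pi_k)_{n\le k\le N-1}\in\Pi^{N-n}$. Condition on the first action $a_n$ and the first step using the Ionescu-Tulcea construction, i.e.\ the Markov property of $\mathbb{P}^\pi$. This shows that
\[
\mathbb{E}^\pi\big[\mathrm{e}^{-r(T-t_n)}g(\mathcal{X}_N^{\pi,n,x})\big]=\int_A\mathbb{E}^\xi\big[\mathrm{e}^{-rT/N}J_{n+1}^{\pi'}(F(x,a,\xi))\big]\pi_n(da\mid x),
\]
where $\pi'=(\pi_k)_{k\ge n+1}$ is the shifted policy and $J_{n+1}^{\pi'}$ is its cost-to-go. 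Since $J_{n+1}^{\pi'}\le\mathcal{V}_{n+1}^\star$ pointwise, the right-hand side is at most $\int_AQ_n(x,a)\,\pi_n(da\mid x)\le\max_aQ_n(x,a)$. Nonnegativity of $g$ from~\Hyp{H6} ensures that all these integrals are well defined in $[0,+\infty]$, so Tonelli's theorem can be used freely.

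For the inequality ``$\ge$'', use the deterministic Markov policy $(\alpha_n^\star,\dots,\alpha_{N-1}^\star)$ built from the selectors of Step 2 at every later time. By induction its cost-to-go satisfies $J_k=\max_aQ_k$ for $k\ge n+1$. The same one-step decomposition then gives $J_n=Q_n(x,\alpha_n^\star(x))=\max_aQ_n(x,a)$, so the supremum is attained and equals $\max_aQ_n(x,a)$.

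The subtle point here is the measurability of $\mathcal{V}_{n+1}^\star$, which is needed for $Q_n$ to even make sense. A direct supremum over uncountably many policies need not be measurable. We avoid this issue by proving equality with the continuous function $\max_aQ_{n+1}$ before it is used at step $n$, which is exactly why continuity is carried through the induction.
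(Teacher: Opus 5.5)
Your proof is correct, but it takes a different route from the paper's. The paper does not run the backward induction itself. It invokes the finite-horizon dynamic programming theorem of \cite{HLL96} (Theorem~3.2.1, in its maximization and discounted variants) and only checks the abstract measurable selection condition via Condition~3.3.3 there. That check consists of compactness of the state-independent action set $A$, absence of running cost, and weak continuity of the transition law, which amounts to continuity of $(x,a)\mapsto\mathbb{E}^\xi[\phi(F(x,a,\xi))]$ for bounded continuous $\phi$; \Hyp{H6} is then used only to make the backward integrals well defined. You instead reprove the theorem in this concrete setting, carrying a stronger invariant through the induction: continuity, polynomial growth and nonnegativity of $\mathcal{V}^\star_{n+1}$. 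Dominated convergence with Gaussian exponential moments gives joint continuity of $Q_n$ for the actual, unbounded continuation value. Berge's theorem and a measurable maximum theorem then give attainment and a measurable selector, and a one-step Ionescu-Tulcea decomposition identifies $\max_a Q_n$ with the supremum over $\Pi^{N-n}$. The citation route is shorter. Since the theorem in \cite{HLL96} is formulated over all history-dependent randomized policies, it also gives optimality of the deterministic Markov selector in that larger class, which is more than the lemma requires. Your route is self-contained and treats more carefully a point the paper passes over quickly. Weak continuity controls integrals of bounded continuous functions (and, by approximation, semicontinuous ones bounded on the appropriate side), which does not by itself cover the unbounded value functions of a maximization problem with call-type payoffs. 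It is your Step~1, using the polynomial growth in~\Hyp{H3}, that makes the measurability of $\mathcal{V}^\star_{n+1}$ and the selection step rigorous.
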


\begin{proof}
See \Cref{app:proofs}.
\end{proof}

\section{Actor--critic stochastic policy gradient scheme}\label{sec:algo}

Building on the discrete-time framework of~\Cref{subsec:discrete-UVM}, we now describe the numerical scheme used to compute robust prices in the UVM. The method combines a backward dynamic programming viewpoint with stochastic policy gradient updates. At each time step, the policy (actor) and the value function (critic) are approximated by neural networks and trained alternately. We first recall the actor--critic and PPO framework underlying the method, then introduce the policy classes used in the paper, and finally describe the backward training procedure.

\subsection{Actor--critic framework}\label{subsec:AC-framework}

The discrete DPP stated in~\Cref{subsec:discrete-UVM} has two main implications. First, there exists an optimal Markov policy $\pi^\star=(\pi^\star_n)_{0\leq n\leq N-1}$ for problem~\eqref{eq:Vn}, and this policy is deterministic. Second, the value functions and optimal controls can be characterized by solving $N$ successive one-step maximization problems via~\eqref{eq:dDPP}. We now introduce our actor--critic numerical scheme, which is designed to approximate this backward dynamic programming procedure. The scheme relies on three ingredients.

The first ingredient consists of approximating the optimal policy $\pi^\star=(\pi^\star_n)_{0\leq n\leq N-1}$ by a family of stochastic Markov policies $(\pi_{\theta_n})_{0\leq n\leq N-1}$ represented by $N$ neural networks, each parameterized by $\theta_n\in\Theta$. We now focus on the associated actor--critic optimization problem at a fixed time step. In the remainder of this subsection, $n$ is fixed and $\theta$ denotes the actor parameter at step $n$. At each step $n=N-1,\dots,0$, the discrete DPP motivates the statewise performance criterion
\begin{equation}\label{eq:Jn}
\mathcal{J}_n(x;\theta)\coloneqq \mathbb{E}^{a\sim\pi_{\theta}(\cdot\mid x),\,\xi}\left[\mathrm{e}^{-r\frac{T}{N}}\mathcal{V}^\star_{n+1}\big(F(x,a,\xi)\big)\right],\qquad \forall x\in(0,+\infty)^d,
\end{equation}
where the expectation is taken jointly over 
$a\sim\pi_\theta(\cdot\mid x)$ and $\xi\sim\mathcal{N}(0,I_d)$, independently. Formally, for each fixed state $x$, one would like to solve
\begin{equation}\label{eq:step-n-pb}
\max_{\theta\in\Theta}\mathcal{J}_n(x;\theta).
\end{equation}
However, because the same neural network parameter $\theta$ must act simultaneously on many states, the actual training problem is not pointwise in $x$. In practice, the actor is therefore trained through an averaged objective over a state-sampling distribution $\mu_n$ on $(0,+\infty)^d$, introduced later in~\Cref{subsec:training}. More precisely, the local criterion~\eqref{eq:Jn} is aggregated into
\[
\overline{\mathcal{J}}_n(\theta)\coloneqq \mathbb{E}^{x\sim\mu_n}\big[\mathcal{J}_n(x;\theta)\big].
\]
Although the optimal policy is deterministic, we parameterize a stochastic family of policies. This is standard practice in reinforcement learning~\cite{Wil92}: the added stochasticity promotes exploration of the action space during training and generally leads to more stable optimization, especially in high dimension. In our setting, the stochastic policies are only used during training, and are later annealed toward deterministic ones. For ease of presentation, in the following, we assume
\begin{hyp}
    \item \label{H7} for each $0\leq n\leq N-1$, the stochastic policies $\{\pi_{\theta_n},\theta_n\in\Theta\}$ admit densities with respect to some measure $\kappa$ on their action space, i.e., there exist parameterized measurable functions $\{p_{\theta_n},\theta_n\in\Theta\}$ such that
    \[
    \pi_{\theta_n}(da\mid x)=p_{\theta_n}(a\mid x)\kappa(da).
    \]
\end{hyp}
This assumption is satisfied by the policy classes introduced in~\Cref{subsec:policies}.

The second ingredient is the use of stochastic policy gradient (PG) methods to solve the step-$n$ optimization problem. The general heuristic of PG methods is to approximate a maximizer $\theta_n^\star$ by gradient ascent, given an estimator of the gradient of the performance criterion. A standard choice is the score function estimator introduced in~\cite{Wil92} in the REINFORCE algorithm. In our backward setting, it starts from the identity
\begin{equation}\label{eq:reinforce}
\nabla_{\theta}\,\mathcal{J}_n(x;\theta)
=\mathbb{E}^{a\sim\pi_{\theta}(\cdot\mid x),\,\xi}\left[\mathrm{e}^{-r\frac{T}{N}}\mathcal{V}^\star_{n+1}\big(F(x,a,\xi)\big)\,\nabla_{\theta}\log p_{\theta}(a\mid x)\right],
\end{equation}
which is proved in~\Cref{app:proofs}. The gradient~\eqref{eq:reinforce} is then estimated by Monte Carlo using i.i.d. samples of $a$ and $\xi$. In practice, we do not have access to the exact value function $\mathcal{V}_{n+1}^\star$. We therefore represent the value functions $(\mathcal{V}_{n}^\star)_{0\leq n\leq N-1}$ by $N$ neural networks $(\mathcal{V}_{\phi_n})_{0\leq n\leq N-1}$, each parameterized by $\phi_n\in\Phi$. Replacing in~\eqref{eq:reinforce} the unknown function $\mathcal{V}_{n+1}^\star$ by the already learned approximation $\mathcal{V}_{\phi_{n+1}^\star}$ yields the practical gradient surrogate
\[
\nabla_{\theta}\,\mathcal{J}_n(x;\theta)\approx
\mathbb{E}^{a\sim\pi_{\theta}(\cdot\mid x),\,\xi}\left[
\mathrm{e}^{-r\frac{T}{N}}\mathcal{V}_{\phi_{n+1}^\star}\big(F(x,a,\xi)\big)\,\nabla_{\theta}\log p_{\theta}(a\mid x)
\right].
\]

Maintaining at each step $n$ two neural networks, one for the policy approximation and one for the value function approximation, makes our method belong to the actor--critic class. Actor--critic methods alternate between two coupled learning steps. The \emph{critic} refers to the neural network $\mathcal{V}_{\phi_n}$, which learns the value under the current policy, typically by regression. The \emph{actor} refers to the neural network $\pi_{\theta_n}$, which learns the policy by stochastic PG methods. In addition to providing regression targets for the critic, the current policy also enters the actor update through a control variate. Indeed, the same arguments used to prove~\eqref{eq:reinforce} yield for any measurable function $b:(0,+\infty)^d\to\mathbb{R}$,
\[
\mathbb{E}^{a\sim\pi_{\theta}(\cdot\mid x)}\left[b(x)\nabla_\theta\log p_{\theta}(a\mid x)\right]=0,
\]
so we may replace~\eqref{eq:reinforce} by
\[
\nabla_{\theta}\,\mathcal{J}_n(x;\theta)\approx
\mathbb{E}^{a\sim\pi_{\theta}(\cdot\mid x),\,\xi}\left[
\Big(\mathrm{e}^{-r\frac{T}{N}}\mathcal{V}_{\phi_{n+1}^\star}\big(F(x,a,\xi)\big)-b(x)\Big)\,\nabla_{\theta}\log p_{\theta}(a\mid x)
\right].
\]
A standard and effective choice is to take $b$ as the value function under the current policy (see, e.g., \cite{GBB04}), which is what the critic $\mathcal{V}_{\phi_n}$ is designed to approximate. This leads to the practical gradient surrogate
\begin{equation}\label{eq:reinforce-approx}
\nabla_{\theta}\,\mathcal{J}_n(x;\theta)\approx
\mathbb{E}^{a\sim\pi_{\theta}(\cdot\mid x),\,\xi}\left[
\Big(\mathrm{e}^{-r\frac{T}{N}}\mathcal{V}_{\phi_{n+1}^\star}\big(F(x,a,\xi)\big)-\mathcal{V}_{\phi_n}(x)\Big)\,\nabla_{\theta}\log p_{\theta}(a\mid x)
\right].
\end{equation}

In practice, these gradient identities are used to perform gradient ascent steps in parameter space, with learning rate $\eta>0$. At iteration $\ell$, this suggests the schematic update
\[
\theta_n^{(\ell+1)}\leftarrow\theta_n^{(\ell)}+\eta\,\nabla_{\theta}\,\overline{\mathcal{J}}_n\big(\theta_n^{(\ell)}\big),
\]
where the expectation over states is approximated empirically using sampled states drawn from $\mu_n$. However, it has long been observed that in policy space, a small step in parameter space may induce a large change in the distribution $\pi_{\theta}(\cdot\mid x)$, which can significantly degrade performance, especially when policies are represented by neural networks and the action space is high-dimensional.

The third ingredient of our methodology therefore consists in the use of a \emph{trust-region} idea, whose goal is to restrict policy updates so that the new policy remains sufficiently close to the current one. To motivate this idea, let $\theta_n^{(\ell)}$ denote the current parameters and let $\theta$ be a candidate update. For each fixed state $x$, the likelihood-ratio identity yields
\[
\mathcal{J}_n(x;\theta)
=
\mathbb{E}^{a\sim\pi_{\theta_n^{(\ell)}}(\cdot\mid x),\,\xi}\left[
\frac{p_{\theta}(a\mid x)}{p_{\theta_n^{(\ell)}}(a\mid x)}
\mathrm{e}^{-r\frac{T}{N}}\mathcal{V}_{n+1}^\star\big(F(x,a,\xi)\big)
\right].
\]
This formulation allows one to evaluate a candidate policy using samples generated from the current policy. Such sample reuse is the key idea behind importance-sampling-based policy optimization methods. Trust Region Policy Optimization (TRPO), introduced in~\cite{Sch+15}, proposes to maximize such a surrogate objective while explicitly controlling the change in policy. At the statewise level, this may be written schematically as
\[
\theta_n^{(\ell+1)}
\in
\arg\max_{\theta\in\Theta}\mathcal{J}_n(x;\theta)
\quad\text{s.t.}\quad
\operatorname{KL}\big(\pi_{\theta_n^{(\ell)}}(\cdot\mid x)\,\|\,\pi_{\theta}(\cdot\mid x)\big)\leq \tau,
\]
where $\tau>0$ controls the maximal allowed deviation between successive policies. In practice, TRPO is computationally demanding: it relies on second-order information through the Hessian of the Kullback--Leibler divergence, combined with a conjugate gradient algorithm and a backtracking line search. This makes each TRPO update relatively expensive. For this reason, we instead adopt the Proximal Policy Optimization (PPO) method introduced in~\cite{Sch+17}.

PPO replaces the constrained optimization problem by a simpler clipped surrogate objective, in which the likelihood ratio
\[
\frac{p_{\theta}(a\mid x)}{p_{\theta_n^{(\ell)}}(a\mid x)}
\]
is clipped to remain within a prescribed interval $[1-\varepsilon,1+\varepsilon]$ via
\[
\operatorname{clip}(y,1-\varepsilon,1+\varepsilon)\coloneqq \min\big\{\max\{y,1-\varepsilon\},\,1+\varepsilon\big\}.
\]
This clipping prevents excessively large policy updates while retaining the simplicity of first-order gradient methods. Combining this idea with the approximation~\eqref{eq:reinforce-approx}, the resulting statewise PPO objective reads
\begin{align}
\mathcal{J}_n^{\mathrm{PPO}}(x,\theta_n^{(\ell)};\theta)
\coloneqq
\mathbb{E}^{a\sim\pi_{\theta_n^{(\ell)}}(\cdot\mid x),\,\xi}\Bigg[
\min\Bigg\{&
\frac{p_{\theta}(a\mid x)}{p_{\theta_n^{(\ell)}}(a\mid x)}\,\operatorname{Adv}_n(x,a,\xi),\nonumber\\
&
\operatorname{clip}\Bigg(
\frac{p_{\theta}(a\mid x)}{p_{\theta_n^{(\ell)}}(a\mid x)},
1-\varepsilon,1+\varepsilon
\Bigg)\operatorname{Adv}_n(x,a,\xi)
\Bigg\}
\Bigg]\label{eq:PPO-loss},
\end{align}
where
\begin{equation}\label{eq:adv}
\operatorname{Adv}_n(x,a,\xi)\coloneqq
\mathrm{e}^{-r\frac{T}{N}}\mathcal{V}_{\phi_{n+1}^\star}\big(F(x,a,\xi)\big)-\mathcal{V}_{\phi_n}(x)
\end{equation}
is the usual advantage term. As for the performance criterion itself, the practical actor loss used in training is obtained by averaging this local PPO objective over the state-sampling distribution $\mu_n$ and replacing the expectations by empirical minibatch averages.

\subsection{Policy parameterization}\label{subsec:policies}

In the multidimensional UVM, the main challenge is to design policy classes that are both expressive and compatible with the structural constraints of the model, in particular the positive semidefiniteness of the covariance matrix and the prescribed bounds on volatilities and correlations.

In this subsection, we introduce the policy classes used throughout the paper. We distinguish between a continuous family of policies, built from a latent unconstrained parameterization of covariance controls, and a bang-bang family, in which the volatility components are restricted to their extreme values. The continuous construction relies on a C-vine parameterization of correlation matrices, which enforces positive semidefiniteness by construction and constitutes a key ingredient of our approach. 

\subsubsection{Continuous policies}\label{subsubsec:cont-policy}

We first introduce a continuous family of stochastic policies. In the UVM, the natural control variables are the volatilities and correlations, rather than a diffusion matrix representation. We therefore construct policies directly on the pair $(\sigma,\rho)$, where $\sigma=(\sigma^1,\dots,\sigma^d)$ is the volatility vector and $\rho\in\mathcal{E}^d$ is a correlation matrix. The pairwise correlation bounds $\rho^{ij}\in[\rho^{ij}_{\min},\rho^{ij}_{\max}]$ are not enforced at this stage and will instead be handled through a penalty term in the actor objective, whose precise form is specified in~\Cref{sec:numerics}. Accordingly, we consider the support
\begin{equation}\label{eq:set-B}
B\coloneqq
\Bigl\{
(\sigma,\rho)\in\mathbb R^d\times\big(\mathcal E^d\cap\mathcal{S}^d_{++}\big)\,:\,
\sigma^i\in\big(\sigma_{\min}^i,\sigma_{\max}^i\big),\ \forall 1\leq i\leq d
\Bigr\},
\end{equation}
and for each state $x\in(0,+\infty)^d$, the continuous policy is a probability measure 
\[
\pi_\theta(\cdot\mid x)\in\mathcal P(B).
\]
The set $B$ is a relaxation of the admissible set: it enforces positive definiteness of the correlation matrix and the volatility bounds, but not the pairwise correlation bounds $\rho^{ij}\in[\rho^{ij}_{\min},\rho^{ij}_{\max}]$, which are instead enforced through the penalty term described in \Cref{sec:numerics}.

A direct Gaussian policy on $B$ is not well suited, since $B$ is a bounded action set with a nonlinear structural constraint on the correlation component. We therefore follow the standard idea of \emph{squashed Gaussian} policies in reinforcement learning, as popularized by Soft Actor--Critic methods~\cite{Haa+18}: one starts from an unconstrained Gaussian latent variable and maps it to the action domain through a $\tanh$-squashing transformation. Starting from an unconstrained latent variable $z\in\mathbb{R}^{\frac{d(d+1)}{2}}$, we first apply $\tanh$ elementwise to obtain a bounded variable $\tanh(z)\in(-1,1)^{\frac{d(d+1)}{2}}$. 

Splitting $z$ as $z=(z_\sigma,z_\rho)\in\mathbb{R}^d\times\mathbb{R}^{\frac{d(d-1)}{2}}$, the first component $z_\sigma$ is then mapped to the admissible volatility intervals by a rescaling and shifting:
\[
\sigma^i \coloneqq \frac{\sigma^i_{\max}+\sigma^i_{\min}}{2}
+\frac{\sigma^i_{\max}-\sigma^i_{\min}}{2}\tanh(z_\sigma^i),
\qquad \forall 1\leq i\leq d.
\]
The correlation component $z_\rho$ requires a more structured construction to account for the positive semidefiniteness constraint. After the $\tanh$-squashing, $\tanh(z_\rho)$ belongs to $(-1,1)^{\frac{d(d-1)}{2}}$ and its entries are interpreted as \emph{partial correlations}. We then rely on the C-vine parameterization of~\cite{Joe06,JKL09}, which provides a smooth parameterization of valid correlation matrices by partial correlations. This is a key ingredient of our policy design: it allows us to generate positive semidefinite correlation matrices directly from unconstrained latent variables, without resorting to projection steps or repeated validity checks. Accordingly, we define the map
\[
C:(-1,1)^{\frac{d(d-1)}{2}}\ni y\to C(y)\in\mathcal{E}^d\cap\mathcal{S}^d_{++},
\]
given by the C-vine parameterization of~\cite{JKL09}. Note that $C$ is a smooth bijection onto its image, see~\cite{JKL09}. Concretely, when $d=3$, the standard correlations are recovered from the partial correlations $(y_{12},y_{13},y_{23\mid 1})$ through
\[
\rho^{12}=y_{12},\qquad
\rho^{13}=y_{13},\qquad
\rho^{23}=y_{23\mid 1}\sqrt{(1-(\rho^{12})^2)(1-(\rho^{13})^2)}+\rho^{12}\rho^{13}.
\]
A further benefit of the C-vine construction is that 
the Cholesky factor $L$ of the correlation matrix can be 
read off directly from the partial correlations, without 
requiring a numerical factorization. Each row~$i$ of the 
lower-triangular factor $L$ satisfying $\rho=LL^\top$ is 
built from the partial correlations involving asset~$i$ at 
successive vine levels. Continuing with the case $d=3$,
\[
L = \begin{pmatrix}
1 & 0 & 0\\
y_{12} & \sqrt{1-y_{12}^2} & 0\\
y_{13} & y_{23\mid 1}\sqrt{1-y_{13}^2} 
& \sqrt{(1-y_{13}^2)(1-y_{23\mid 1}^2)}
\end{pmatrix}.
\]
For $d\geq 4$, the map is defined recursively; we refer to~\cite{JKL09} for explicit formulae. Since the partial correlations are bounded away from $\pm 1$ by the $\tanh$-squashing, each factor $\sqrt{1-y^2}$ is strictly positive and $L$ has positive diagonal. This is exploited in the log-Euler 
scheme~\eqref{eq:log-euler}: the action 
$a=\operatorname{diag}(\sigma)\,L$ is used directly to generate correlated increments, avoiding the numerical factorization of the correlation matrix. The pairwise correlations $\rho^{ij}$ are still computed separately 
for the penalty term described in~\Cref{subsubsec:correlation-penalty}, but this involves only the vine recursion and not a full matrix factorization.

Composing the $\tanh$-squashing with the rescaling and shifting map for the volatility component, and the $C$ map for the correlation component, we obtain a smooth bijection
\[
\mathcal{T}_{\mathrm{UVM}}:\mathbb{R}^{\frac{d(d+1)}{2}}\ni z\to \mathcal{T}_{\mathrm{UVM}}(z)\in B.
\]
The resulting policy is obtained by representing the mean of the latent Gaussian by a neural network
\[
m_\theta:(0,+\infty)^d\ni x\mapsto m_{\theta}(x)\in \mathbb{R}^{\frac{d(d+1)}{2}},
\]
with parameter $\theta$, and by letting $\Lambda\in\mathbb{R}^{\frac{d(d+1)}{2}\times \frac{d(d+1)}{2}}$ be a diagonal covariance matrix. For each state $x\in(0,+\infty)^d$, we define the continuous policy $\pi_\theta(\cdot\mid x)$ as the pushforward of the Gaussian law $\mathcal N(m_\theta(x),\Lambda)$ through $\mathcal T_{\mathrm{UVM}}$, that is,
\[
\pi_\theta(\cdot\mid x)
=
(\mathcal T_{\mathrm{UVM}})_{\#}\,\mathcal N(m_\theta(x),\Lambda).
\]
Since by composition $\mathcal T_{\mathrm{UVM}}$ is a $\mathcal{C}^1$-diffeomorphism, the corresponding density is given by the change-of-variables formula
\begin{equation}\label{eq:policy-diffeo}
p_\theta(\sigma,\rho\mid x)
\coloneqq
\varphi\bigl(\mathcal T_{\mathrm{UVM}}^{-1}(\sigma,\rho);\,m_\theta(x),\Lambda\bigr)
\left|\det \operatorname{Jac}_{\mathcal T_{\mathrm{UVM}}^{-1}}(\sigma,\rho)\right|,
\qquad
(\sigma,\rho)\in B,
\end{equation}
where $\varphi(\cdot\,;m,\Lambda)$ denotes the density of $\mathcal N(m,\Lambda)$.

\begin{remark}
The set $B$ is open: its elements have volatilities in the open 
intervals $(\sigma^i_{\min},\sigma^i_{\max})$ and correlations with entries strictly between $-1$ and $1$. The boundary values 
$\sigma^i_{\min}$, $\sigma^i_{\max}$, $\pm 1$ are excluded because $\tanh$ maps $\mathbb{R}$ onto the open interval $(-1,1)$. In 
practice, this is not a limitation: as the latent variable $|z|\to\infty$, the policy concentrates arbitrarily close to any boundary configuration, so bang-bang-like behavior is recovered in the limit.
\end{remark}

\begin{remark}
The construction above simplifies considerably in low dimension. When $d=1$, there is no correlation variable, and $\mathcal{T}_{\mathrm{UVM}}$ reduces to a single scaled $\tanh$ map sending $z_\sigma\in\mathbb{R}$ to $\sigma^1\in\big(\sigma^1_{\min},\sigma^1_{\max}\big)$. When $d=2$, any value $\rho^{12}\in(-1,1)$ defines a valid $2\times 2$ correlation matrix, so the positive semidefiniteness constraint is automatically satisfied and the C-vine map $C$ reduces to the identity. In that case, the correlation component may be sampled directly, without resorting to a vine construction. Moreover, in dimension two, the pairwise correlation bounds may be enforced directly, since they are automatically compatible with positive semidefiniteness; thus no relaxed set and no penalty term are needed for the correlation component.
\end{remark}

\subsubsection{Bang-bang policies}\label{subsubsec:bangbang-policy}

We now introduce a second family of stochastic policies, in which the volatility components are restricted to their extreme values. More precisely, for each $1\leq i\leq d$, the $i$-th volatility component is constrained to take values in
\[
\{\sigma^i_{\min},\sigma^i_{\max}\}.
\]
These policies are referred to as \emph{bang-bang policies}. They define a lighter policy class than the continuous one, both in terms of sampling and in terms of network output dimension, and are naturally motivated by the low-dimensional structure recalled in~\Cref{sec:UVM}.

To encode the bang-bang volatility choices, we associate with each component a binary variable $a^i\in\{0,1\}$ through
\[
\sigma^i(a^i)\coloneqq \sigma^i_{\min}+a^i\big(\sigma^i_{\max}-\sigma^i_{\min}\big),
\qquad \forall 1\leq i\leq d.
\]
Hence the set of bang-bang volatility configurations is identified with $\{0,1\}^d$. For each state $x\in(0,+\infty)^d$, the bang-bang policy is therefore a probability measure
\[
\pi_\theta(\cdot\mid x)\in\mathcal P(\{0,1\}^d).
\]

A direct categorical parameterization on $\{0,1\}^d$ would require $2^d$ logits, which becomes quickly intractable in high dimension. We therefore adopt a factorized Bernoulli parameterization, modeling the binary decisions as conditionally independent given the state. More precisely, letting
\[
q_\theta^i:(0,+\infty)^d\to(0,1),
\qquad \forall 1\leq i\leq d,
\]
the corresponding density with respect to the counting measure on $\{0,1\}^d$ is
\begin{equation}\label{eq:bangbang-density}
p_\theta(a\mid x)\coloneqq
\prod_{i=1}^d
\bigl(q_\theta^i(x)\bigr)^{a^i}
\bigl(1-q_\theta^i(x)\bigr)^{1-a^i},
\qquad
\forall a=(a^1,\dots,a^d)\in\{0,1\}^d.
\end{equation}
Equivalently, $\pi_\theta(\cdot\mid x)$ is the law of a vector of independent Bernoulli random variables with parameters $(q_\theta^1(x),\dots,q_\theta^d(x))$. The functions $q_\theta^i$ are represented jointly by a neural network with parameter $\theta$.

This factorized Bernoulli construction has two main advantages. First, the number of policy outputs grows linearly with the dimension $d$, rather than exponentially. Second, a single simulated trajectory contributes simultaneously to the learning of all binary decisions, which yields a particularly light and effective policy class in high dimension.

\begin{remark}
The bang-bang structure is theoretically exact in dimension one: the optimal volatility takes values in $\{\sigma^1_{\min},\sigma^1_{\max}\}$, see~\eqref{eq:sigma-star-1d}. In higher dimension, the same idea can always be applied to the volatility components, but in general it cannot be extended directly to the correlation component. Indeed, correlations must assemble into a positive semidefinite matrix, and this global constraint couples the entries $\rho^{ij}$ in a non-trivial way, so that independent bang-bang sampling of the pairwise correlations would typically fail to produce a valid correlation matrix.

An important exception is dimension two. In that case, there is only one correlation coefficient $\rho^{12}$, and the positive semidefiniteness constraint reduces to $\rho^{12}\in[-1,1]$. The correlation component can therefore be sampled directly by an additional Bernoulli variable. Moreover, this extension is theoretically exact, since the optimal correlation is itself bang-bang in dimension two, taking values in $\{\rho^{12}_{\min},\rho^{12}_{\max}\}$, see~\eqref{eq:opt-correl}. Beyond these low-dimensional cases, bang-bang controls are not guaranteed to be optimal in general. Nevertheless, they define a significantly lighter approximation class and provide a natural benchmark against the continuous construction.
\end{remark}

\subsection{Training}\label{subsec:training}

\begin{algorithm}[H]
\caption{SPG--UVM algorithm}\label{alg:algo-bwd}
\begin{algorithmic}[1]
\Require Spot price $x_0$; volatility bounds $(\sigma^i_{\min},\sigma^i_{\max})_{1\leq i\leq d}$; correlation bounds $(\rho^{ij}_{\min},\rho^{ij}_{\max})_{1\leq i<j\leq d}$; state-sampling measures $(\mu_n)_{0\leq n\leq N-1}$
\Ensure Near-optimal value functions $(\mathcal{V}_{\phi^{\star}_n})_{0\leq n\leq N-1}$ and near-optimal deterministic Markov policies $(\pi_{\theta^\star_n})_{0\leq n\leq N-1}$
\For{$n=N-1$ \textbf{down to} $0$}
  \State Initialize neural networks $\mathcal{V}_{\phi_n}$ and $\pi_{\theta_n}$
  \State Set $\ell\leftarrow 0$
  \Repeat
    \State \textbf{(i) Data collection:} draw $\mathcal{X}^{(\ell)}_{n}\sim\mu_n$, draw $\xi_n^{(\ell)}\sim\mathcal{N}(0,I_d)$, sample $a^{(\ell)}_n\sim \pi_{\theta_n^{(\ell)}}(\cdot\mid\mathcal{X}^{(\ell)}_{n})$
    \State \textbf{(ii) Critic update:} update $\phi_n^{(\ell)}$ by stochastic gradient descent on the critic loss
    \State \textbf{(iii) Actor update:} update $\theta_n^{(\ell)}$ by stochastic gradient ascent on the actor objective
    \State $\ell\leftarrow\ell+1$
  \Until{the critic loss is not improving anymore}
  \State Set $\phi^\star_n\leftarrow\phi_n^{(\ell)}$, $\theta^\star_n\leftarrow\theta_n^{(\ell)}$
  \State \textbf{Last critic step:} one additional critic update using the deterministic policy induced by $\theta_n^\star$
\EndFor
\end{algorithmic}
\end{algorithm}

We now describe the backward training loop implementing the actor--critic scheme. For each time step $n$, the policy parameter $\theta_n$ and the critic parameter $\phi_n$ are trained alternately during epochs $\ell=0,1,2,\dots$ following the sequence
\[
\phi^{(0)}_n\to\theta^{(0)}_n\to\phi^{(1)}_n\to\dots
\to\phi^{(\ell)}_n
\to\theta^{(\ell)}_n\to\dots\to\phi^\star_n\to\theta^\star_n,
\]
while the parameters at later time steps
\[
(\phi^\star_{n+1},\theta^\star_{n+1}),\dots,(\phi^\star_{N-1},\theta^\star_{N-1})
\]
are kept fixed and serve as already learned continuation values and policies. This alternate training characterizes the actor--critic method and consists of two coupled optimization stages at each epoch: 
\begin{enumerate}
\item \emph{local policy evaluation}: update the critic $\mathcal{V}_{\phi_n}$ 
to approximate the value of the current policy $\pi_{\theta^{(\ell)}_n}$,
\item \emph{local policy improvement}: update the actor $\pi_{\theta_n}$ 
via the PPO objective~\eqref{eq:PPO-loss}, using the updated critic $\mathcal{V}_{\phi_n^{(\ell+1)}}$ to form the advantage term~\eqref{eq:adv}.
\end{enumerate}

\paragraph{Data collection.}
At epoch $\ell$, the actor parameter $\theta_n^{(\ell)}$ is frozen and used to generate training samples. A state $\mathcal{X}^{(\ell)}_n$ is sampled from the distribution $\mu_n$, then an action
\[
a_n^{(\ell)}\sim \pi_{\theta_n^{(\ell)}}(\cdot\mid \mathcal X_n^{(\ell)})
\]
is drawn from the current policy, and finally a Gaussian increment $\xi_n^{(\ell)}\sim\mathcal N(0,I_d)$ is sampled. In the continuous case, the action is generated by first drawing
\[
Z_n^{(\ell)}\sim \mathcal N\bigl(m_{\theta_n^{(\ell)}}(\mathcal X_n^{(\ell)}),\Lambda\bigr)
\]
and then setting
\[
a_n^{(\ell)}=\mathcal T_{\mathrm{UVM}}(Z_n^{(\ell)}).
\]
In the bang-bang case, the action is sampled directly from the factorized Bernoulli policy defined in~\Cref{subsubsec:bangbang-policy}. For convenience, we denote by $\mathbb E_n^{(\ell)}$ the expectation with respect to the joint law of the sampled variables.

\paragraph{Critic update.}
The critic is trained to approximate the value of the current policy at step $n$, using the already learned continuation value $\mathcal V_{\phi^\star_{n+1}}$ as a target. The critic loss is therefore
\[
\mathcal{L}^{\mathrm C}_n(\phi)\coloneqq
\mathbb E_n^{(\ell)}\left[
\left|
\mathcal V_\phi(\mathcal X_n^{(\ell)})
-
\mathrm e^{-r\frac{T}{N}}
\mathcal V_{\phi^\star_{n+1}}
\bigl(F(\mathcal X_n^{(\ell)},a_n^{(\ell)},\xi_n^{(\ell)})\bigr)
\right|^2
\right],
\]
with the convention $\mathcal V_{\phi^\star_N}\equiv g$ when $n=N-1$.

\paragraph{Actor update.}
The actor update is based on the empirical counterpart of the statewise PPO objective~\eqref{eq:PPO-loss}, averaged over the sampled states. Using the sampled data collected under the frozen policy $\pi_{\theta_n^{(\ell)}}$, the corresponding empirical PPO objective reads
\begin{align*}
\mathcal{L}^{\mathrm{A,PPO}}_n(\theta)\coloneqq
\mathbb E_n^{(\ell)}\Bigg[
\min\Bigg\{&
\frac{p_\theta(a_n^{(\ell)}\mid \mathcal X_n^{(\ell)})}
{p_{\theta_n^{(\ell)}}(a_n^{(\ell)}\mid \mathcal X_n^{(\ell)})}
\,
\operatorname{Adv}_n^{(\ell)},\\
&
\operatorname{clip}\Bigg(
\frac{p_\theta(a_n^{(\ell)}\mid \mathcal X_n^{(\ell)})}
{p_{\theta_n^{(\ell)}}(a_n^{(\ell)}\mid \mathcal X_n^{(\ell)})},
1-\varepsilon,1+\varepsilon
\Bigg)\operatorname{Adv}_n^{(\ell)}
\Bigg\}
\Bigg],
\end{align*}
where
\[
\operatorname{Adv}_n^{(\ell)}
\coloneqq
\mathrm e^{-r\frac{T}{N}}
\mathcal V_{\phi^\star_{n+1}}
\bigl(F(\mathcal X_n^{(\ell)},a_n^{(\ell)},\xi_n^{(\ell)})\bigr)
-
\mathcal V_{\phi_n^{(\ell+1)}}(\mathcal X_n^{(\ell)}).
\]
Recall that the continuous policy introduced in~\Cref{subsubsec:cont-policy} relaxes the pairwise correlation bounds of the UVM. The actor objective is thus augmented by a penalty term enforcing the pairwise correlation bounds of the UVM. The practical actor objective takes the form
\begin{equation}\label{eq:actor-loss}
\mathcal L_n^{\mathrm A}(\theta)
=
\mathcal L_n^{\mathrm{A,PPO}}(\theta)-\Psi_n(\theta),
\end{equation}
where $\Psi_n$ penalizes violations of the correlation constraints. The precise form of this additional term, together with the corresponding implementation choice, is described in~\Cref{sec:numerics}.

The likelihood ratio appearing in the PPO objective admits explicit forms for both policy families. For the continuous policy, the Jacobian factors cancel in the likelihood ratio, yielding
\begin{align}
\frac{p_{\theta}(a\mid x)}{p_{\theta_n^{(\ell)}}(a\mid x)}
&=
\frac{\varphi\bigl(\mathcal T_{\mathrm{UVM}}^{-1}(a);m_{\theta}(x),\Lambda\bigr)}
{\varphi\bigl(\mathcal T_{\mathrm{UVM}}^{-1}(a);m_{\theta_n^{(\ell)}}(x),\Lambda\bigr)}\nonumber\\
&=
\exp\Big[
\big(m_{\theta}(x)-m_{\theta_n^{(\ell)}}(x)\big)^\top
\Lambda^{-1}
\big(Z-\tfrac12(m_{\theta}(x)+m_{\theta_n^{(\ell)}}(x))\big)
\Big],\label{eq:PPO-ratio}
\end{align}
using that $a=\mathcal T_{\mathrm{UVM}}(Z)$, so that $Z=\mathcal T_{\mathrm{UVM}}^{-1}(a)$ is directly available from the latent sample used to generate the action. Thus, despite the squashing map, the PPO ratio takes the same explicit Gaussian form as in the non-squashed case. For the bang-bang policy, the ratio is
\[
\frac{p_\theta(a\mid x)}{p_{\theta_n^{(\ell)}}(a\mid x)}
=
\prod_{i=1}^d
\left(\frac{q_\theta^i(x)}{q_{\theta_n^{(\ell)}}^i(x)}\right)^{a^i}
\left(\frac{1-q_\theta^i(x)}{1-q_{\theta_n^{(\ell)}}^i(x)}\right)^{1-a^i}.
\]

In practice, the critic and actor parameters are updated by one stochastic gradient descent step and one stochastic gradient ascent step, respectively, on empirical minibatch approximations of their losses.

\paragraph{One last critic step.}
At the end of the inner loop, one additional critic update is performed using the deterministic policy induced by the final actor parameter $\theta_n^\star$. Such deterministic policy is obtained for the continuous case by considering
\[
\mathcal{T}_{\mathrm{UVM}}(m_{\theta_n^\star}(x)),
\]
and for the bang-bang case by replacing each Bernoulli parameter $q^i_{\theta_n^\star}(x)$ by the deterministic decision 
\[
\mathbf{1}_{\{q^i_{\theta_n^\star}(x)\geq 1/2\}}.
\]
This yields the final critic parameter $\phi_n^\star$, which approximates the value of the deterministic policy used for inference rather than that of the last stochastic training policy.

\paragraph{Price estimators.}
The algorithm provides two complementary estimators of the option price at time~$0$. The \emph{actor price} is obtained by Monte 
Carlo simulation of the controlled process under the trained deterministic Markov policy 
$\pi_{\theta^\star}\coloneqq(\pi_{\theta_n^\star})_{0\leq n\leq N-1}$ 
and averaging the discounted terminal payoff 
$\mathrm{e}^{-rT}g\big(\mathcal{X}_N^{\pi_{\theta^\star}}\big)$. Since the actor price is the expected payoff under a specific 
(generally suboptimal) policy, it provides a lower bound for the optimal value $\mathcal{V}_0^\star(x_0)$ of the discrete-time control problem. It is reported with a $95\%$ Monte Carlo confidence interval. The \emph{critic price} is obtained by 
direct evaluation of the learned value function at the initial state: $\mathcal{V}_{\phi^\star_0}(x_0)$. Unlike the actor price, the critic price carries no guaranteed bound property and should 
be interpreted as a pointwise approximation of 
$\mathcal{V}_0^\star(x_0)$.

\section{Numerical implementation and experiments}\label{sec:numerics}

In this section, we specify the numerical choices used to implement the actor--critic stochastic policy gradient scheme introduced in~\Cref{sec:algo}, and then report the resulting pricing experiments. We first describe the main implementation ingredients that are specific to our experiments, namely the treatment of the relaxed correlation constraints, the exploration--exploitation mechanisms, the training design and the computational environment. We then present the benchmark setting and the numerical results in the uncertain- and fixed-correlation regimes.

\subsection{Numerical implementation}\label{subsec:num-impl}

We begin with the implementation choices that complement the general framework of~\Cref{sec:algo}. Some of them are structural, such as the treatment of the relaxed pairwise correlation bounds in the continuous policy class, while others concern the practical stabilization of training and the concrete choice of hyperparameters. We start with the correlation penalty, which is a central ingredient of the continuous-policy implementation.

\subsubsection{Correlation penalty}\label{subsubsec:correlation-penalty}

Recall from~\Cref{subsubsec:cont-policy} that the continuous policy is supported on the relaxed set $B$ defined in~\eqref{eq:set-B}, in which positive semidefiniteness of the correlation matrix is enforced by construction, but the pairwise UVM bounds
\[
\rho^{ij}\in[\rho^{ij}_{\min},\rho^{ij}_{\max}],\qquad 1\leq i<j\leq d,
\]
are relaxed. As a consequence, actions sampled from the continuous policy may fail to satisfy all correlation constraints of the original control problem. To steer the actor toward the admissible region, we augment the PPO actor objective with a penalty term. This is a standard device in constrained optimization. One replaces the constrained problem by a penalized unconstrained one, in such a way that violations of the constraints become increasingly costly as the penalty weight grows. Under standard assumptions, solutions of the penalized problems converge to those of the original constrained problem as the penalty parameter tends to $+\infty$, see for instance~\cite{NW06}.

Given $\rho\in\mathcal{E}^d\cap\mathcal{S}^d_{++}$, we 
define the penalty
\[
\Psi(\rho)\coloneqq
\beta\,\frac{2}{d(d-1)}
\sum_{1\leq i<j\leq d}
\left[
\operatorname{Hub}\left(\frac{(\rho^{ij}-\rho^{ij}_{\max})^+}
{\rho^{ij}_{\max}-\rho^{ij}_{\min}}\right)
+
\operatorname{Hub}\left(\frac{(\rho^{ij}_{\min}-\rho^{ij})^+}
{\rho^{ij}_{\max}-\rho^{ij}_{\min}}\right)
\right],
\]
where $\beta>0$ is the penalty weight and $\operatorname{Hub}$ is the Huber function~\cite{Hub64} with transition threshold $\delta>0$:
\[
\operatorname{Hub}(v)\coloneqq
\begin{cases}
\dfrac{v^2}{2\delta}, & \text{if } v\leq\delta,\\[6pt]
v-\dfrac{\delta}{2}, & \text{if } v>\delta.
\end{cases}
\]
This choice is motivated by numerical stability. For small violations, the penalty is quadratic and therefore provides a smooth gradient signal driving the policy toward feasibility. For larger violations, it becomes linear, which prevents excessively large gradients and makes training more robust. In our experiments, we set $\delta=0.05$ and $\beta=10$. See~\Cref{app:penalty-sensitivity} for a sensitivity analysis to $\beta$.

The correlation penalty $\Psi_n$ in~\eqref{eq:actor-loss} is 
evaluated at the deterministic correlation produced by the actor 
mean: denoting by $(\sigma,\rho)=\mathcal{T}_{\mathrm{UVM}}
\bigl(m_\theta(\mathcal{X}_n^{(\ell)})\bigr)$, we set $\Psi_n(\theta)\coloneqq\Psi(\rho)$ inside the empirical training objective. This ensures that the regularization acts 
directly on the control used at inference time, rather than on each noisy exploratory sample.

Finally, let us recall that this penalty mechanism is only needed for the continuous policy in dimension $d\geq 3$. In dimension two, as discussed in~\Cref{subsubsec:cont-policy}, the pairwise correlation bounds can be enforced directly while preserving positive semidefiniteness, so no relaxation is required.

\subsubsection{Exploration and exploitation}\label{subsubsec:exploration-exploitation}

The policy classes introduced in~\Cref{sec:algo} are stochastic during training, whereas the optimal control of the discrete-time problem is deterministic, see~\Cref{lem:dDPP}. The numerical scheme must therefore reconcile two conflicting requirements. On the one hand, stochasticity is needed during training to explore the action space and avoid poor local solutions. On the other hand, the learned policy must progressively concentrate around a deterministic control as training proceeds. The practical implementation of this exploration--exploitation trade-off depends on the chosen policy class.

For the continuous policy, exploration is controlled through the covariance matrix of the latent Gaussian variable. Recall that, conditionally on the state $x$, the policy is defined as the pushforward of the Gaussian law
\[
\mathcal N\bigl(m_\theta(x),\Lambda\bigr)
\]
through the map $\mathcal T_{\mathrm{UVM}}$. In our experiments, we take at training epoch $\ell$
\[
\Lambda^{(\ell)}=\lambda^{(\ell)}I_{\frac{d(d+1)}{2}},
\]
where $\lambda^{(\ell)}>0$ is a scalar temperature parameter. Large values of $\lambda^{(\ell)}$ induce significant dispersion of the sampled latent variables and therefore promote exploration of the control space. As $\lambda^{(\ell)}$ decreases, the policy concentrates around the deterministic action
\[
\mathcal T_{\mathrm{UVM}}\bigl(m_\theta(x)\bigr),
\]
thereby progressively shifting from exploration toward exploitation. In the limit $\lambda^{(\ell)}\to 0$, the stochastic policy converges to the Dirac measure at $\mathcal{T}_{\mathrm{UVM}}\bigl(m_\theta(x)\bigr)$.

For the bang-bang policy, exploration is of a different nature. The actor outputs Bernoulli parameters
\[
\bigl(q_\theta^1(x),\dots,q_\theta^d(x)\bigr),
\]
and the stochasticity of the policy comes from the corresponding independent Bernoulli draws. In principle, this already provides exploration. In practice, however, bang-bang policies are prone to a premature collapse phenomenon: the Bernoulli parameters may rapidly saturate near $0$ or $1$, so that the policy becomes nearly deterministic too early, before the actor has sufficiently explored the set of admissible controls. This effect can degrade training. To mitigate this issue, we add an entropy regularization term to the bang-bang actor objective. Denoting by $\mathcal H\bigl(\pi_\theta(\cdot\mid x)\bigr)$ the Shannon entropy of the bang-bang policy at state $x$, we consider the entropy-regularized objective
\begin{equation}\label{eq:actor-loss-bb}
\mathcal L_n^{\mathrm A}(\theta)
=
\mathcal L_n^{\mathrm{A,PPO}}(\theta)
+\gamma^{(\ell)}\,
\mathbb E_n^{(\ell)}\left[
\mathcal H\bigl(\pi_\theta(\cdot\mid \mathcal X_n^{(\ell)})\bigr)
\right],
\end{equation}
where $\gamma^{(\ell)}\geq 0$ is an entropy coefficient. Maximizing the entropy encourages the Bernoulli parameters to remain away from the extreme values $0$ and $1$, thereby maintaining sufficient exploration during the early stages of training. This type of regularization is classical in reinforcement learning, and was first suggested in~\cite{WP91} to encourage exploration and prevent early collapse of the policy toward a deterministic one.

For the continuous policy, we do not add a similar entropy bonus. Indeed, exploration is already controlled explicitly through the covariance matrix $\Lambda^{(\ell)}$, which is prescribed exogenously rather than learned. In our implementation, the entropy of the continuous policy therefore depends only on the chosen covariance schedule and does not provide an additional useful training signal for the actor parameters.

To pass gradually from exploration to exploitation, both $\lambda^{(\ell)}$ in the continuous case and $\gamma^{(\ell)}$ in the bang-bang case are annealed during training. For continuous policies, the temperature $\lambda^{(\ell)}$ is decreased to a small final value, so that exploratory behavior dominates at the beginning of training while near-deterministic decisions are recovered at the end. For bang-bang policies, the entropy coefficient $\gamma^{(\ell)}$ is annealed toward $0$, so that exploration is progressively reduced and the policy is allowed to stabilize around a deterministic decision rule. In both cases, we use the same smooth sigmoid-type decay, taken from~\cite{PW25} and illustrated in~\Cref{fig:annealing}.

\subsubsection{Training design and neural-network architecture}\label{subsubsec:training-design}

We now describe the concrete training design used in our numerical experiments. The actor and the critic are both represented by feedforward neural networks with one hidden layer, using the nonlinear $\operatorname{ELU}$ activation function
\[
\operatorname{ELU}(x)\coloneqq
\begin{cases}
x, & \text{if } x>0,\\
\mathrm e^x-1, & \text{otherwise},
\end{cases}
\]
on the hidden layer and the identity activation on the output layer. Both networks take as input the current state $x\in\mathbb R^d$. The critic outputs a scalar, while the actor outputs a vector in $\mathbb R^{\frac{d(d+1)}{2}}$ when both volatilities and correlations are uncertain, or in $\mathbb R^d$ when correlations are fixed and only the volatility components are controlled. To improve stability and accelerate convergence, we normalize the inputs and use layer normalization~\cite{BKH16}. In all the experiments reported below, both actor and critic use $32$ neurons on their hidden layer.

The time discretization parameter $N$ in~\eqref{eq:time-discretization} depends on the considered example and is specified for each payoff below. During training, we use $M=2^{15}$ Monte Carlo samples, while actor-based price estimates are computed with $2^{19}$ Monte Carlo samples and are reported with a $95\%$ confidence interval. During training, the Gaussian increments are sampled using antithetic variates, that is, $\xi$ and $-\xi$ are drawn jointly. This doubles the effective batch size while reducing the variance of the Monte Carlo estimators.

The actor and critic parameters are optimized with the Adam algorithm~\cite{KB14}, using minibatches of size $2^{10}$. Each epoch therefore consists of $M/2^{10}=32$ gradient updates. We deliberately use large minibatches: this improves GPU efficiency and, in our experiments, leads to more stable optimization by reducing batch noise.

At each time step, training runs for a fixed number of epochs $E$, chosen to align with the sigmoid annealing schedule of the exploration parameter. This design reflects the two-phase structure of the optimization: during the first half of training, the exploration 
parameter is large and the policy explores the control space broadly, while during the second half it decreases toward its final value and the policy concentrates around a near-deterministic control. A fixed epoch budget 
ensures that both phases complete fully, which would not be guaranteed by an early stopping criterion based on the critic loss, since the critic loss may plateau during the 
exploratory phase before the policy has begun to exploit. Both the actor and critic learning rates are annealed jointly with the exploration parameter, from 
$5\times 10^{-3}$ to $10^{-4}$, using the same sigmoid schedule. The temperature of the continuous policy is annealed from $1$ to $0.01$ while the entropy coefficient 
of the bang-bang policy is annealed from $0.01$ to $0$. The common sigmoid schedule is illustrated in~\Cref{fig:annealing} for the continuous-policy temperature. The PPO clipping parameter is fixed to $\varepsilon=0.2$, and the advantages are normalized to have zero mean and unit variance within each epoch, following standard PPO practice.

\begin{figure}[ht]
\centering
\includegraphics[width=0.5\textwidth]{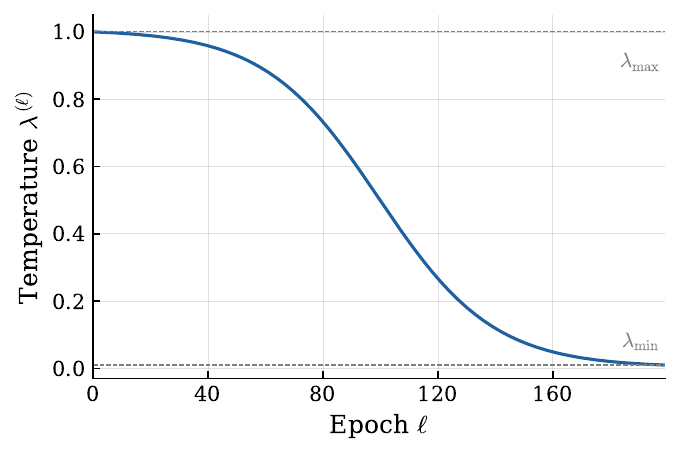}
\caption{Sigmoid annealing schedule used for the exploration parameter during training, illustrated here for the continuous-policy temperature $\lambda^{(\ell)}$, transitioning from $\lambda_{\max}$ (exploration) to $\lambda_{\min}$ (exploitation) over the training epochs.}
\label{fig:annealing}
\end{figure}

It remains to specify the state-sampling distribution $\mu_n$ introduced in~\Cref{sec:algo}. We choose $\mu_n$ as a $d$-dimensional log-normal distribution constructed by sampling the volatility components independently and uniformly in their admissible intervals, while setting correlations to $0$. Equivalently, the components of the sampled state $\mathcal X_n^{(\ell)}$ are generated independently as
\[
x_0^i\exp\Bigl(\bigl(r-\tfrac{(\sigma^i)^2}{2}\bigr)t_n+\sigma^i\sqrt{t_n}\,Y^i\Bigr),
\qquad
Y\sim\mathcal N(0,I_d),\quad
\sigma^i\sim\mathcal U\bigl([\sigma^i_{\min},\sigma^i_{\max}]\bigr).
\]
Ignoring correlations in the state-sampling distribution is a deliberate simplification. In dimensions $d\geq 3$, it avoids repeated matrix factorizations when generating correlated samples and therefore reduces computational cost. In our experiments, this simplification does not lead to any noticeable degradation in training quality.

Finally, we use transfer learning across time steps. More precisely, after training the actor and critic at time step $n+1$, we initialize the networks at time step $n$ with the corresponding trained weights. This exploits the continuity of the value function and optimal control with respect to 
time, together with the similarity of the backward subproblems. In practice, transfer learning significantly accelerates the overall backward procedure: after a relatively careful training at the first step $n=N-1$ with 
$E=500$ epochs, subsequent time steps require far fewer epochs. Accordingly, the epoch budget is reduced to $E=10$ for all steps $n\leq N-2$, and the learning rates 
are divided by $10$. Similar observations have been reported in related backward-learning approaches, see for instance~\cite{BHLP22}.

\begin{remark}
We fix the epoch budget as above for all tested payoffs and dimensions, without adapting it to the specific problem. This choice provides a good balance between 
accuracy and computational cost across the panel of options considered. \Cref{app:different-E} reports the evolution of the actor and critic prices as a function of the inner epoch budget $E$ for a representative test case, illustrating that $E=10$ is sufficient for the transfer-initialized steps.
\end{remark}

\subsubsection{Implementation details}\label{subsubsec:implementation-details}

The neural-network training and pricing phases are implemented in PyTorch and executed on a single GPU. All state variables, Gaussian increments and neural-network parameters are stored as tensors on a single CUDA device, so that the main linear-algebra and element-wise operations --- forward passes, backpropagation, optimizer updates and Monte Carlo path evolution --- are carried out directly on the GPU.

Parallelism is exploited along the Monte Carlo and batch dimensions. At each time step, the controlled Euler scheme advances all simulated paths in a single vectorized operation, and the actor--critic networks are evaluated on whole mini-batches of states. Time remains sequential at the Python level, but within each time layer the computations are batched and dispatched to GPU kernels for matrix multiplications, activations, element-wise nonlinearities and reductions. PyTorch handles the corresponding low-level scheduling and kernel execution.

Numerical experiments were run on a Dell Precision 7680 mobile workstation equipped with an Intel Core i9-13950HX CPU (24 cores, 32 threads, 2.20\,GHz), 64\,GB RAM and an NVIDIA GeForce RTX~4090 GPU (16\,GB VRAM), running Microsoft Windows~11 Pro. All codes were implemented in Python~3.12.8 using PyTorch~2.6.0 with CUDA~12.6 and cuDNN~9.5.1.

The implementations reported in~\cite{GMZ24} are CPU-based and their runtimes mostly range from a few dozen minutes to several hours per experiment. By contrast, our training loop runs on a single local GPU and yields substantially shorter runtimes. For this reason, we refrain from making direct quantitative runtime comparisons with~\cite{GMZ24}. We nevertheless report our own runtimes, since they are small in absolute terms and are obtained with a straightforward PyTorch implementation, without custom kernels or hardware-specific low-level tuning. The reported runtimes should therefore be interpreted as practical computational costs rather than lower bounds on what could be achieved with a highly optimized implementation.

\subsection{Experimental setting}\label{subsec:exp-setting}

We now describe the experimental setting used to assess the proposed scheme. We first explain the benchmark choice, then specify the classes of options and model parameters considered, and finally describe the content of the summary tables reported below.

\paragraph{Benchmark choice.}

The literature on high-dimensional option pricing in the UVM is sparse. To the best of our knowledge, the only works reporting extensive high-dimensional pricing experiments in this framework are~\cite{GH11} and~\cite{GMZ24}. Since~\cite{GMZ24} was published more than a decade after~\cite{GH11} and reports improved accuracy on a broader set of examples by incorporating machine-learning techniques, we use the methods of~\cite{GMZ24} as our main benchmark.

More precisely,~\cite{GMZ24} introduce two numerical schemes. The first one, denoted by \emph{GTU}, combines Gaussian process regression with a multidimensional tree. It is generally the most accurate of the two, but also the most computationally expensive in high dimension because of its inherent tree structure. The second one, denoted by \emph{NNU}, parameterizes the optimal control globally by a time- and state-dependent neural network trained forward in time. It scales better with the dimension and, like our method, provides approximate controls together with a price estimate. These two methods therefore constitute natural and complementary benchmarks for the present work.

When comparing our prices with those reported in~\cite{GMZ24}, we proceed as follows. For a common choice of the time discretization parameter $N$, we report the GTU price associated with the largest number of Gaussian-process regression points, and the NNU price associated with the largest number of training epochs. These correspond to the most accurate prices reported in~\cite{GMZ24}.

\paragraph{Tested options and model parameters.}

The considered options are the geo-outperformer, geo-call spread, outperformer spread and best-of butterfly options, up to dimension $d=5$ in the uncertain-correlation setting and up to dimension $d=80$ when correlations are fixed. We also include a one-dimensional path-dependent example, namely the call Sharpe option.

Unless explicitly stated otherwise, the model parameters are those reported in~\Cref{tab:all_dim_params}. The best-of butterfly is tested under the modified parameters of~\Cref{tab:new_params}.

\begin{table}[H]
  \centering
  \begin{tabular}{ccccccc}
    \toprule
    $x_0^i$ & $\sigma_{\min}^i$ & $\sigma_{\max}^i$ & $\rho^{ij}_{\min}$ & $\rho^{ij}_{\max}$ & $T$ & $r$ \\
    \midrule
    $100$ & $0.1$ & $0.2$ & $-0.5$ & $0.5$ & $1$ & $0$ \\
    \bottomrule
  \end{tabular}
  \caption{Default model parameters used in the numerical experiments.}
  \label{tab:all_dim_params}
\end{table}

\begin{table}[H]
  \centering
  \begin{tabular}{ccccccc}
    \toprule
    $x_0^i$ & $\sigma_{\min}^i$ & $\sigma_{\max}^i$ & $\rho^{ij}_{\min}$ & $\rho^{ij}_{\max}$ & $T$ & $r$ \\
    \midrule
    $100$ & $0.3$ & $0.5$ & $0.3$ & $0.5$ & $0.25$ & $0.05$ \\
    \bottomrule
  \end{tabular}
  \caption{Model parameters used for the best-of butterfly.}
  \label{tab:new_params}
\end{table}

\paragraph{Tables description.}

The results produced by our scheme are summarized in~\Cref{tab:uc-summary} and~\Cref{tab:cc-summary}. The column \emph{Policy} indicates whether we use continuous, i.e., squashed Gaussian, policies or bang-bang, i.e., Bernoulli, policies. The columns \emph{Actor price} and \emph{Critic price} report the two price estimators introduced in~\Cref{subsec:training}. The column \emph{Runtime} includes the full backward training loop over all time steps together with the final Monte Carlo simulation used to compute the actor-based estimate. Finally, \emph{Reference price} denotes a reliable reference value, computed by a payoff-dependent benchmark method specified in each case below.

For comparison purposes, \Cref{tab:uc-bench} and~\Cref{tab:cc-bench} report the benchmark values from~\cite{GMZ24}, together with the corresponding reference prices.

\subsection{Uncertain correlation}\label{subsec:uc}

When correlations are uncertain, the control dimension is $\frac{d(d+1)}{2}$ so the problem rapidly becomes challenging as the number of assets increases. As in~\cite{GMZ24}, we therefore restrict to dimensions up to $d=5$, which already corresponds to a $15$-dimensional control problem. We report in~\Cref{tab:uc-summary} the results obtained with our scheme, and in~\Cref{tab:uc-bench} the GTU and NNU benchmark values from~\cite{GMZ24}. For all three payoffs, the number of time steps is fixed to $N=128$.

\begin{table}[H]
\centering
\caption{Uncertain-correlation tests: estimated prices and runtimes.}
\label{tab:uc-summary}
\begin{adjustbox}{max width=\textwidth}
\begin{tabular}{@{} l c c c c c c @{}}
\toprule
\textbf{Option} & $d$ & Policy & Actor price & Critic price & Runtime (s) & Reference price \\
\midrule
\multirow{5}{*}{\textbf{Geo-outperformer}}
& \multirow{2}{*}{2} & Continuous & $13.75 \pm 0.06$ & 13.77 & 136 & \multirow{2}{*}{13.75} \\
&                    & Bang-bang  & $13.76 \pm 0.06$ & 13.74 & 115 &  \\
& 3 & Continuous & $12.92 \pm 0.05$ & 13.01 & 214 & 12.96 \\
& 4 & Continuous & $12.62 \pm 0.05$ & 12.75 & 291 & 12.73 \\
& 5 & Continuous & $12.50 \pm 0.05$ & 12.61 & 365 & 12.64 \\
\midrule
\multirow{2}{*}{\textbf{Outperformer spread}}
& \multirow{2}{*}{2} & Continuous & $12.79 \pm 0.02$ & 12.85 & 141 & \multirow{2}{*}{12.83} \\
&                    & Bang-bang  & $12.82 \pm 0.02$ & 12.83 & 120 &  \\
\midrule
\multirow{2}{*}{\textbf{Best-of butterfly}}
& \multirow{2}{*}{2} & Continuous & $6.62 \pm 0.02$ & 6.62 & 163 & \multirow{2}{*}{6.70} \\
&                    & Bang-bang  & $6.64 \pm 0.02$ & 6.67 & 149 &  \\
\bottomrule
\end{tabular}
\end{adjustbox}
\end{table}

\begin{table}[H]
\centering
\caption{Uncertain-correlation tests: \cite{GMZ24} benchmarks.}
\label{tab:uc-bench}
\begin{adjustbox}{max width=\textwidth}
\begin{tabular}{@{} l c c c c @{}}
\toprule
\textbf{Option} & $d$ & GTU & NNU & Reference price \\
\midrule
\multirow{4}{*}{\textbf{Geo-outperformer}}
& 2 & 13.76 & $13.70 \pm 0.13$ & 13.75 \\
& 3 & 12.93 & $12.84 \pm 0.12$ & 12.96 \\
& 4 & 12.69 & $12.62 \pm 0.11$ & 12.73 \\
& 5 & 12.54 & $12.45 \pm 0.11$ & 12.64 \\
\midrule
\multirow{1}{*}{\textbf{Outperformer spread}}
& 2 & 12.77 & $12.80 \pm 0.05$ & 12.83 \\
\bottomrule
\end{tabular}
\end{adjustbox}
\end{table}

We begin with the geo-outperformer option, whose payoff is
\[
\left(\sqrt[d-1]{\prod_{i=2}^d X_T^i}-X_T^1\right)^+.
\]
For this option, the reference price is the high-accuracy reference reported in~\cite{GMZ24}, obtained by fixing the correlations at their known optimal values and then maximizing over the volatilities with a refined GTU run. Our algorithm remains stable across all dimensions $d=2,\dots,5$. As the dimension increases, the actor-based estimate deteriorates slightly, with a relative error of $1.11\%$ at $d=5$, a behavior also observed for NNU, whereas GTU remains below the $1\%$ level. The critic-based estimate is more accurate throughout and remains very close to the reference value even in dimension $5$. The runtimes also remain moderate, with about six minutes for the $d=5$ case.

We next consider the outperformer spread option
\[
\left(X_T^2-0.9X_T^1\right)^+-\left(X_T^2-1.1X_T^1\right)^+,
\]
for which the reference price is computed by the finite-difference method of~\cite{MF17}. Here again, both actor and critic estimates are accurate, with an actor-based relative error of $0.31\%$. The critic-based estimate is slightly more accurate, and the runtime remains short.

Finally, we consider the best-of butterfly
\[
\left(X_T^{\max}-K_1\right)^+
-2\left(X_T^{\max}-\frac{K_1+K_2}{2}\right)^+
+\left(X_T^{\max}-K_2\right)^+,
\qquad
X_T^{\max}\coloneqq \max\{X_T^1,X_T^2\},
\]
with $K_1=85$ and $K_2=115$. This test case is not reported in~\cite{GMZ24}, so we use the modified parameters of~\Cref{tab:new_params}, and compute the reference price with the finite-difference scheme of~\cite{MF17}. This example is more challenging because of both the payoff structure and the wider volatility interval. Nevertheless, the algorithm remains accurate: the actor-based relative error is $1.19\%$ for the continuous policy and $0.90\%$ for the bang-bang one, while the critic-based relative errors are $1.19\%$ and $0.45\%$, respectively.

In the two-dimensional tests, bang-bang policies are slightly more accurate than continuous ones and also lead to shorter runtimes.

\begin{figure}[H]
\centering
\includegraphics[width=1.0\textwidth]{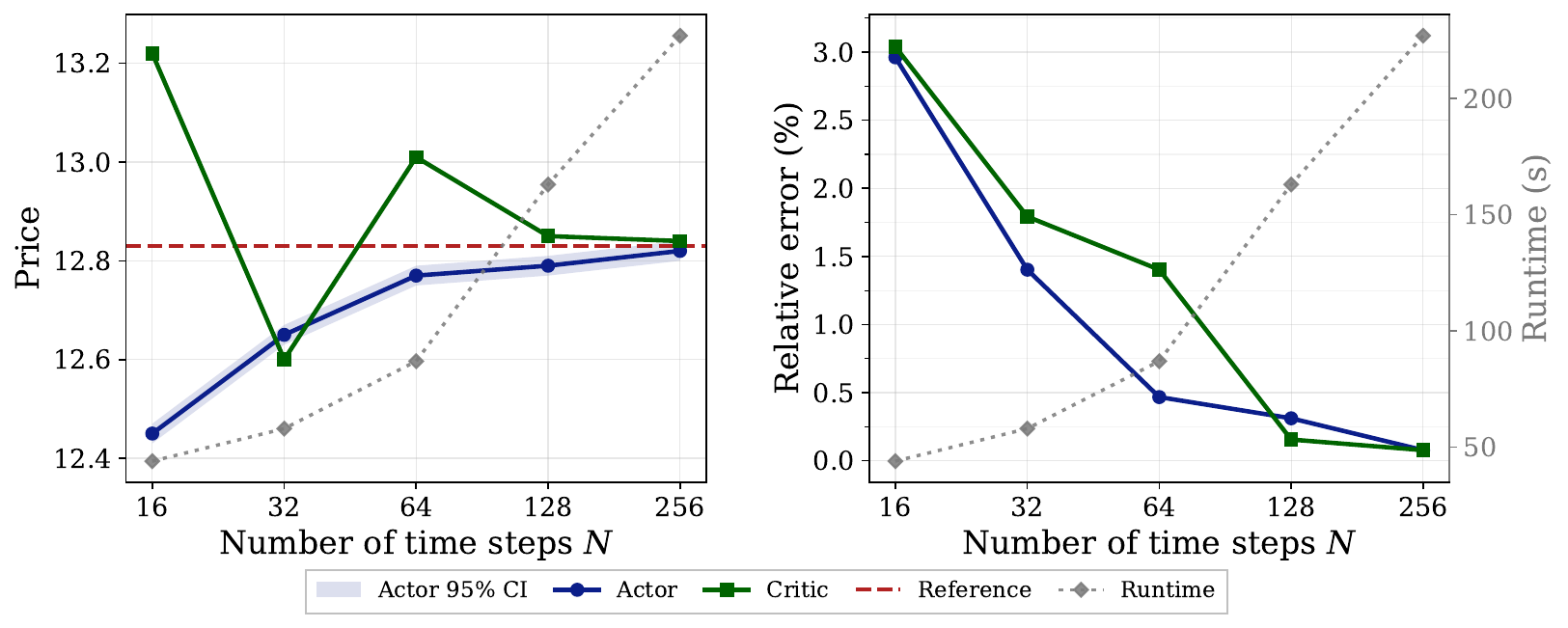}
\caption{Convergence of the actor- and critic-based price estimates with respect to the number of time steps $N$ for the outperformer spread option.}
\label{fig:outperf-CV}
\end{figure}

\begin{figure}[H]
\centering
\includegraphics[width=1.0\textwidth]{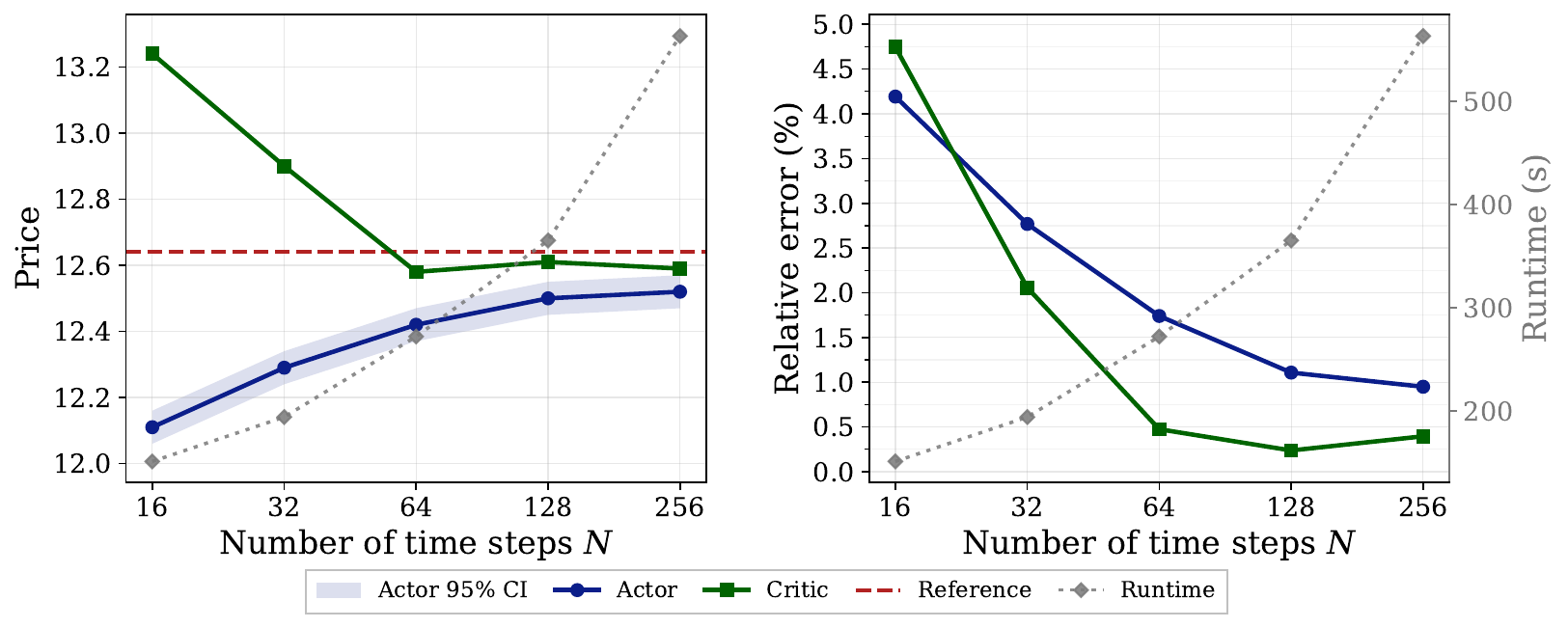}
\caption{Convergence of the actor- and critic-based price estimates with respect to the number of time steps $N$ for the geo-outperformer option with $d=5$.}
\label{fig:geo-outperf-CV}
\end{figure}

Finally, we report in~\Cref{fig:outperf-CV} and \Cref{fig:geo-outperf-CV} the evolution of critic and actor prices, as well as their associated relative errors, for the outperformer spread and geo-outperformer ($d=5$) options. Relative errors are defined as the absolute value of
\[
\frac{\text{estimated price} - \text{reference price}}
{\text{reference price}}.
\]
The left panels illustrate a structural difference between the two price estimators. The actor price, being the 
expected payoff under a specific policy, is a lower bound for the discrete-time optimal value and converges to the 
reference from below as $N$ increases. The critic price, by contrast, is a pointwise evaluation of the learned value function and carries no guaranteed bias direction: it may lie above or below the reference. The right panels confirm that the absolute relative errors of both estimators decrease consistently with $N$, which is the relevant convergence diagnostic. We note that the 
sign of the critic bias may vary across independent runs, while its magnitude decreases steadily with $N$; additional convergence plots for other payoffs, confirming the same behavior, are reported 
in~\Cref{app:convergence-plots}.

\subsection{Fixed correlation}\label{subsec:cc}

The fixed-correlation setting is simpler than the uncertain-correlation one, so we consider a smaller variety of payoffs but explore larger dimensions. In addition, we include a one-dimensional path-dependent example, namely the call Sharpe option. We report in~\Cref{tab:cc-summary} the results obtained with our scheme, and in~\Cref{tab:cc-bench} the GTU and NNU benchmark values from~\cite{GMZ24}. For the geo-call spread option, we use $N=64$ for $d=2,5,10$ and $N=32$ in higher dimensions. For the call Sharpe option, we use $N=192$.

\begin{table}[H]
\centering
\caption{Fixed-correlation tests: estimated prices and runtimes.}
\label{tab:cc-summary}
\begin{adjustbox}{max width=\textwidth}
\begin{tabular}{@{} l c c c c c c @{}}
\toprule
\textbf{Option} & $d$ & Policy & Actor price & Critic price & Runtime (s) & Reference price \\
\midrule
\multirow{12}{*}{\textbf{\shortstack{Geo-call spread\\($\rho=0$)}}}
& \multirow{2}{*}{2}  & Continuous & $10.51 \pm 0.02$ & 10.49 & 71  & \multirow{2}{*}{10.50} \\
&                      & Bang-bang  & $10.49 \pm 0.02$ & 10.49 & 60  &  \\
& \multirow{2}{*}{5}  & Continuous & $9.68 \pm 0.01$  & 9.70 & 83  & \multirow{2}{*}{9.70} \\
&                      & Bang-bang  & $9.68 \pm 0.01$  & 9.72 & 69  &  \\
& \multirow{2}{*}{10} & Continuous & $9.55 \pm 0.01$  & 9.55 & 41  & \multirow{2}{*}{9.55} \\
&                      & Bang-bang  & $9.55 \pm 0.01$  & 9.56 & 40  &  \\
& \multirow{2}{*}{20} & Continuous & $9.53 \pm 0.00$  & 9.53 & 46  & \multirow{2}{*}{9.53} \\
&                      & Bang-bang  & $9.52 \pm 0.00$  & 9.53 & 52  &  \\
& \multirow{2}{*}{40} & Continuous & $9.51 \pm 0.00$  & 9.51 & 53  & \multirow{2}{*}{9.51} \\
&                      & Bang-bang  & $9.51 \pm 0.00$  & 9.50 & 49  &  \\
& \multirow{2}{*}{80} & Continuous & $9.51 \pm 0.00$  & 9.52 & 64  & \multirow{2}{*}{9.51} \\
&                      & Bang-bang  & $9.49 \pm 0.00$  & 9.51 & 63  &  \\
\midrule
\multirow{2}{*}{\textbf{Call Sharpe}}
& \multirow{2}{*}{1}  & Continuous & $57.22 \pm 0.19$ & 57.39 & 320 & \multirow{2}{*}{58.40} \\
&                      & Bang-bang  & $57.43 \pm 0.19$ & 57.66 & 307 &  \\
\bottomrule
\end{tabular}
\end{adjustbox}
\end{table}

\begin{table}[H]
\centering
\caption{Fixed-correlation tests: \cite{GMZ24} benchmarks.}
\label{tab:cc-bench}
\begin{adjustbox}{max width=\textwidth}
\begin{tabular}{@{} l c c c c @{}}
\toprule
\textbf{Option} & $d$ & GTU & NNU & Reference price \\
\midrule
\multirow{6}{*}{\textbf{\shortstack{Geo-call spread\\($\rho=0$)}}}
& 2  & 10.49 & $10.46 \pm 0.05$ & 10.50 \\
& 5  & 9.74  & $9.66 \pm 0.03$ & 9.70 \\
& 10 & 9.52  & $9.53 \pm 0.02$ & 9.55 \\
& 20 & 9.56  & $9.53 \pm 0.01$ & 9.53 \\
& 40 & 9.54  & $9.51 \pm 0.01$ & 9.51 \\
& 80 & 9.27  & $9.50 \pm 0.01$ & 9.51 \\
\midrule
\multirow{1}{*}{\textbf{Call Sharpe}}
& 1 & 57.90 & $56.32 \pm 0.43$ & 58.40 \\
\bottomrule
\end{tabular}
\end{adjustbox}
\end{table}

We first consider the geo-call spread option, with payoff
\[
\left(\sqrt[d]{\prod_{i=1}^d X_T^i}-K_1\right)^+
-
\left(\sqrt[d]{\prod_{i=1}^d X_T^i}-K_2\right)^+,
\]
where $K_1=90$, $K_2=110$, and the correlations are fixed to $0$. The reference price is reported in~\cite{GMZ24}, where the problem is reduced to a one-dimensional one and priced by Monte Carlo. This example is significantly easier than the uncertain-correlation tests: both actor and critic estimates are extremely accurate, with relative errors close to $0\%$, and runtimes remain of the order of one minute. GTU and NNU achieve comparable performances, except at $d=80$, where GTU loses noticeable accuracy.

We next consider the call Sharpe option, whose payoff at maturity is
\[
g(X_T)\coloneqq \frac{(X_T-K)^+}{\sqrt{V_T}},
\]
where $K=100$ and
\[
V_T=\frac{1}{T}\sum_{n=1}^{N_m}\left(\ln\frac{X_{\tau_n}}{X_{\tau_{n-1}}}\right)^2
\]
is the realized volatility computed from monthly returns, that is, $N_m=12T$ and $\tau_n=n/12$ for $n=1,\dots,N_m$. At time $t$, the option value depends not only on $X_t$, but also on the path-dependent quantities
\[
A_t^1\coloneqq \sum_{n:\,\tau_n\le t}\left(\ln\frac{X_{\tau_n}}{X_{\tau_{n-1}}}\right)^2,
\qquad
A_t^2\coloneqq X_{\sup\{\tau_n:\,\tau_n\le t\}}.
\]
A Markov representation is therefore obtained by augmenting the state with $(A_t^1,A_t^2)$. To ensure that the monitoring dates are contained in the time grid, we choose $N=192$, which is a multiple of $N_m$. The reference price is computed in~\cite{GH11} by a PDE method. This is a more challenging example because of the path-dependent payoff and the augmented state representation. The degradation in accuracy is visible for all methods: in particular, NNU exhibits a relative error of about $3.6\%$. Our method also deteriorates, but remains reasonably accurate; the best result is obtained by the bang-bang critic estimate, with relative error $1.27\%$.

Unlike in the uncertain-correlation setting, the advantage of bang-bang policies is less pronounced for the geo-call spread, but it becomes visible again in the call Sharpe example, both in accuracy and in runtime.

\bibliography{references}

\appendix

\section{Proofs}\label{app:proofs}

\begin{proof}[Proof of \Cref{lem:dDPP}]
We apply the dynamic programming theorem for finite-horizon Markov control processes (Theorem~3.2.1 in~\cite{HLL96}, Chapter~3). Since our problem involves the maximization of a discounted terminal reward rather than the minimization of a running cost, we use the variants described in Section~3.4 of~\cite{HLL96}: the maximization form (equations~(3.4.13)--(3.4.14)) and the discounted form (equations~(3.4.7)--(3.4.8)).

To apply Theorem~3.2.1 in~\cite{HLL96}, it suffices to verify the measurable selection condition (Assumption~3.3.1 in~\cite{HLL96}). We check that Condition~3.3.3 in~\cite{HLL96} is satisfied, which implies the measurable selection condition by Theorem~3.3.5(i) therein for any nonnegative measurable function.

\begin{enumerate}[label=(\alph*)]
    \item The action set $A$ is compact and does not depend on the state, so the multifunction $x\mapsto A$ is trivially upper semicontinuous.
    \item There is no running cost in our problem, so condition~(b) is vacuous.
    \item We verify the weak continuity of the transition law. Let $\phi:(0,+\infty)^d\to\mathbb{R}$ be a continuous bounded function and let $\xi\sim\mathcal{N}(0,I_d)$. From the definition~\eqref{eq:log-euler} of $F$, the map
    \[
    (0,+\infty)^d\times A\ni(x,a)\mapsto F(x,a,\xi)\in(0,+\infty)^d
    \]
    is continuous. Therefore $(x,a)\mapsto\phi\big(F(x,a,\xi)\big)$ is continuous and bounded by composition. By Lebesgue's dominated convergence theorem, the map
    \[
    (0,+\infty)^d\times A\ni(x,a)\mapsto \mathbb{E}^{\xi}\left[\phi\big(F(x,a,\xi)\big)\right]\in\mathbb{R}
    \]
    is continuous and bounded, i.e., the transition law is weakly continuous.
\end{enumerate}

Since Condition~3.3.3 in~\cite{HLL96} is satisfied, Theorem~3.3.5(i) therein guarantees the measurable selection condition for any nonnegative measurable function. The value functions $(\mathcal{V}_n^\star)_{0\leq n\leq N}$ are nonnegative: this holds at $n=N$ since $g\geq 0$ by~\Hyp{H6}, and propagates backward since the discount factor $\mathrm{e}^{-r T/N}$ is positive. Therefore, the dynamic programming theorem (Theorem~3.2.1 in~\cite{HLL96}) applies at each backward step, yielding both the recursion~\eqref{eq:dDPP} and the existence of measurable maximizers $(\alpha_n^\star)_{0\leq n\leq N-1}$.
\end{proof}

\begin{proof}[Proof of the score function identity~\eqref{eq:reinforce}]
Fix $x\in(0,+\infty)^d$ and $\theta\in\Theta$. Under~\Hyp{H7}, the policy $\pi_\theta(\cdot\mid x)$ admits a density $p_\theta(\cdot\mid x)$ with respect to a measure $\kappa$ on the action space. Therefore,
\[
\mathcal{J}_n(x;\theta)
=
\int_A
\mathbb{E}^\xi\left[
\mathrm{e}^{-r\frac{T}{N}}
\mathcal{V}^\star_{n+1}\big(F(x,a,\xi)\big)
\right]
p_\theta(a\mid x)\,\kappa(da).
\]
Since $g\geq 0$ by~\Hyp{H6}, the integrand is nonnegative. Under standard regularity conditions on the parameterization $\theta\mapsto p_\theta(a\mid x)$ (satisfied by the policy classes of \Cref{subsec:policies}), differentiation under the integral sign yields
\[
\nabla_\theta\,\mathcal{J}_n(x;\theta)
=
\int_A
\mathbb{E}^\xi\left[
\mathrm{e}^{-r\frac{T}{N}}
\mathcal{V}^\star_{n+1}\big(F(x,a,\xi)\big)
\right]
\nabla_\theta\, p_\theta(a\mid x)\,\kappa(da).
\]
Using the log-derivative identity $\nabla_\theta\, p_\theta(a\mid x) = p_\theta(a\mid x)\,\nabla_\theta\log p_\theta(a\mid x)$, we obtain
\[
\nabla_\theta\,\mathcal{J}_n(x;\theta)
=
\int_A
\mathbb{E}^\xi\left[
\mathrm{e}^{-r\frac{T}{N}}
\mathcal{V}^\star_{n+1}\big(F(x,a,\xi)\big)
\right]
\nabla_\theta\log p_\theta(a\mid x)\,
p_\theta(a\mid x)\,\kappa(da),
\]
which is precisely~\eqref{eq:reinforce}.

The control variate identity $\mathbb{E}^{a\sim\pi_\theta(\cdot\mid x)}[b(x)\,\nabla_\theta\log p_\theta(a\mid x)] = 0$ for any measurable $b:(0,+\infty)^d\to\mathbb{R}$ follows from the same argument applied to the constant function (in $a$) equal to $b(x)$: since $\int_A p_\theta(a\mid x)\,\kappa(da) = 1$ for all $\theta$, differentiating both sides gives $\int_A \nabla_\theta\log p_\theta(a\mid x)\, p_\theta(a\mid x)\,\kappa(da) = 0$.
\end{proof}

\section{Additional numerical experiments}\label{app:numerics}

\subsection{Sensitivity to the penalty weight}
\label{app:penalty-sensitivity}

We investigate the effect of the penalty weight $\beta$ on 
the geo-outperformer option in dimensions $d=3,4,5$. \Cref{fig:beta-sensitivity} displays, for each dimension, the signed relative error of the actor price with respect 
to the reference price, and the relative price impact of correlation bound violations, defined as 
\[
\frac{\text{unclamped price}-\text{clamped price}}{\text{reference price}},
\]
where the clamped variant projects 
all pairwise correlations onto 
$[\rho^{ij}_{\min},\rho^{ij}_{\max}]$ at each time step. For small values of $\beta$ ($\leq 1$), the penalty is insufficient: the actor price is significantly inflated by correlation violations, as evidenced by the large gap 
between the unclamped and clamped prices. For $\beta=10$, the price impact becomes negligible and the clamped price is close to the reference. For larger values 
($\beta\geq 10^2$), the constraint violations vanish entirely, but the actor price deteriorates as the penalty term competes with the PPO objective during training. 
The value $\beta=10$ used throughout the paper strikes the best balance between constraint enforcement and optimization quality.

\begin{figure}[H]
\centering
\includegraphics[width=1.0\textwidth]{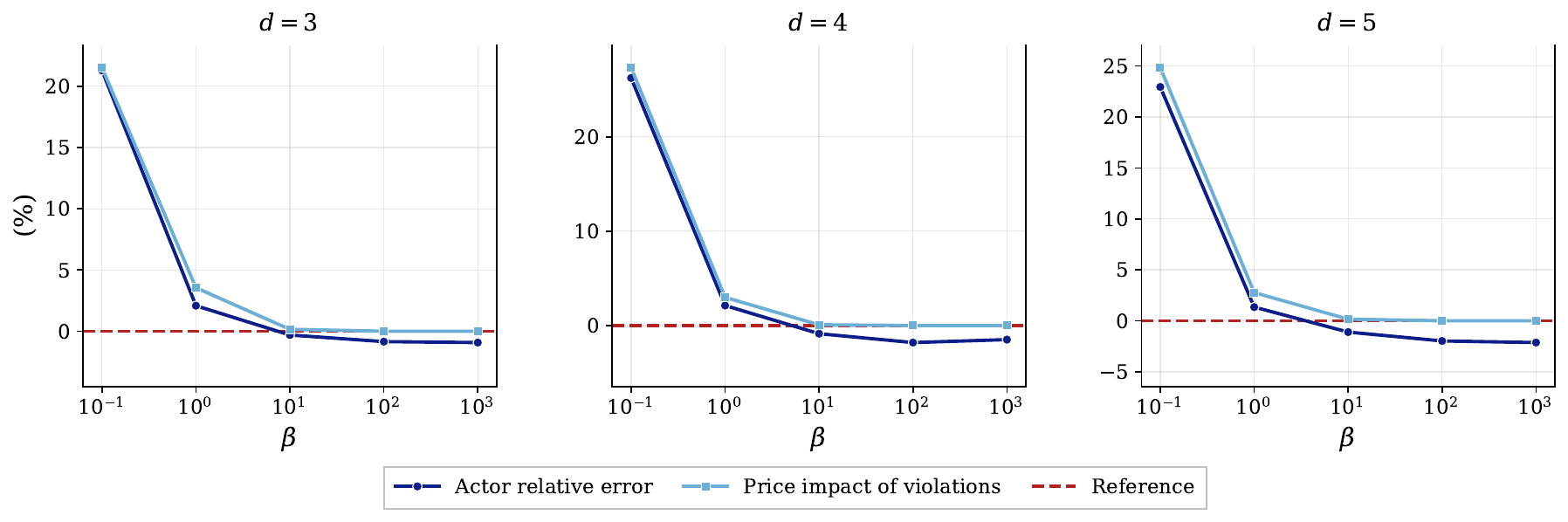}
\caption{Sensitivity to the penalty weight $\beta$ for the 
geo-outperformer option. For each dimension, the actor 
relative error (dark blue) and the relative price impact 
of correlation bound violations (light blue) are shown as 
percentages.}
\label{fig:beta-sensitivity}
\end{figure}

We emphasize that the clamped variant used in this analysis is a diagnostic tool, not a viable alternative to the penalty approach. Projecting pairwise correlations 
onto $[\rho^{ij}_{\min},\rho^{ij}_{\max}]$ does not guarantee that the resulting matrix remains positive semidefinite. In the present example, the violations are small enough that the projected matrix happens to remain 
valid, but this cannot be expected in general, especially for tighter bounds or higher dimensions. The penalty mechanism, by contrast, enforces the bounds softly while 
preserving the positive semidefiniteness guaranteed by the C-vine construction.

\subsection{Sensitivity to the inner epoch budget}
\label{app:different-E}

We investigate the effect of the inner epoch budget $E$ (used for all time steps $n \leq N-2$) on the price estimators. \Cref{fig:sensitivity-E} displays the actor 
and critic prices as a function of $E$ for the 
outperformer spread ($d=2$) and geo-outperformer ($d=3$) options with $N=128$, with all other hyperparameters kept at their default values. The outer epoch budget ($E=500$ for step $n=N-1$) is kept fixed throughout this experiment. Both estimators improve rapidly from $E=3$ to $E=10$. Beyond $E=10$, a small but visible gain in accuracy persists, particularly for the actor price in 
the geo-outperformer case; however, this marginal improvement comes at the cost of a proportional increase in runtime. The value $E=10$ used throughout the paper reflects a deliberate trade-off: it captures the bulk 
of the accuracy gains while keeping the total backward training time moderate.

\begin{figure}[H]
\centering
\includegraphics[width=1.0\textwidth]{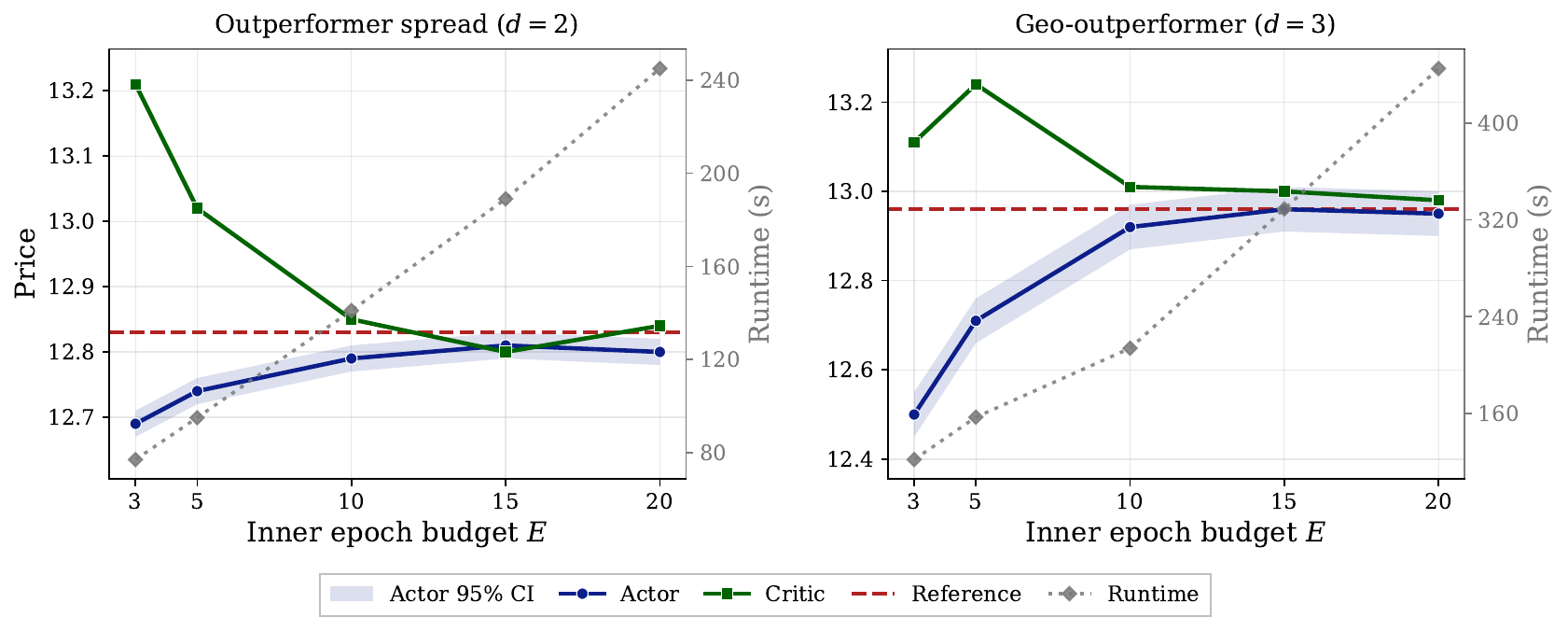}
\caption{Actor and critic prices as a function of the 
inner epoch budget $E$ for the outperformer spread 
($d=2$) and geo-outperformer ($d=3$) options with 
$N=128$. The dashed line indicates the reference price.}
\label{fig:sensitivity-E}
\end{figure}

\subsection{Additional convergence plots}
\label{app:convergence-plots}

\Cref{fig:conv-app-1,fig:conv-app-2} display the convergence of the actor and critic price estimates with respect to $N$ for best-of butterfly and the geo-outperformer ($d=3$) options. The same qualitative behavior as in~\Cref{subsec:uc} is observed: the actor price converges monotonically from below, while the critic price converges in absolute error but with variable sign across runs.

\begin{figure}[H]
\centering
\includegraphics[width=1.0\textwidth]{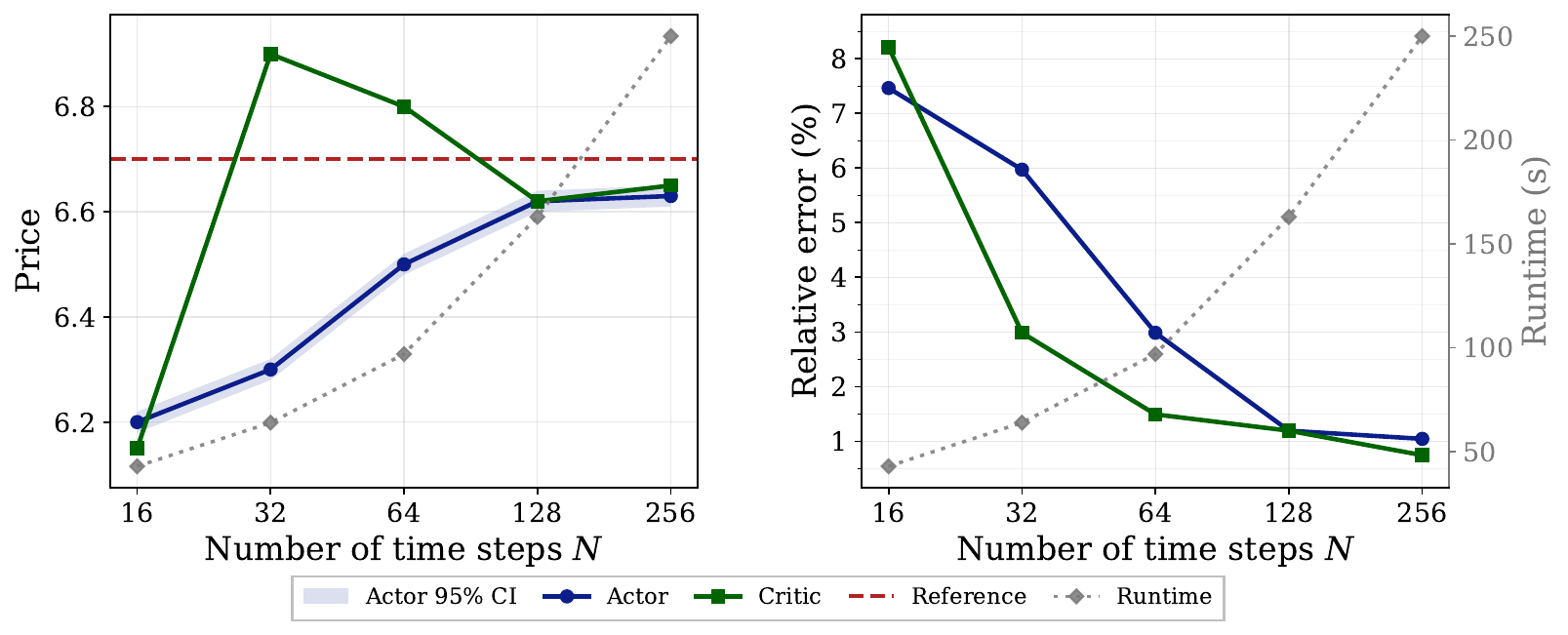}
\caption{Convergence with respect to $N$ for the best-of butterfly option.}
\label{fig:conv-app-1}
\end{figure}

\begin{figure}[H]
\centering
\includegraphics[width=1.0\textwidth]{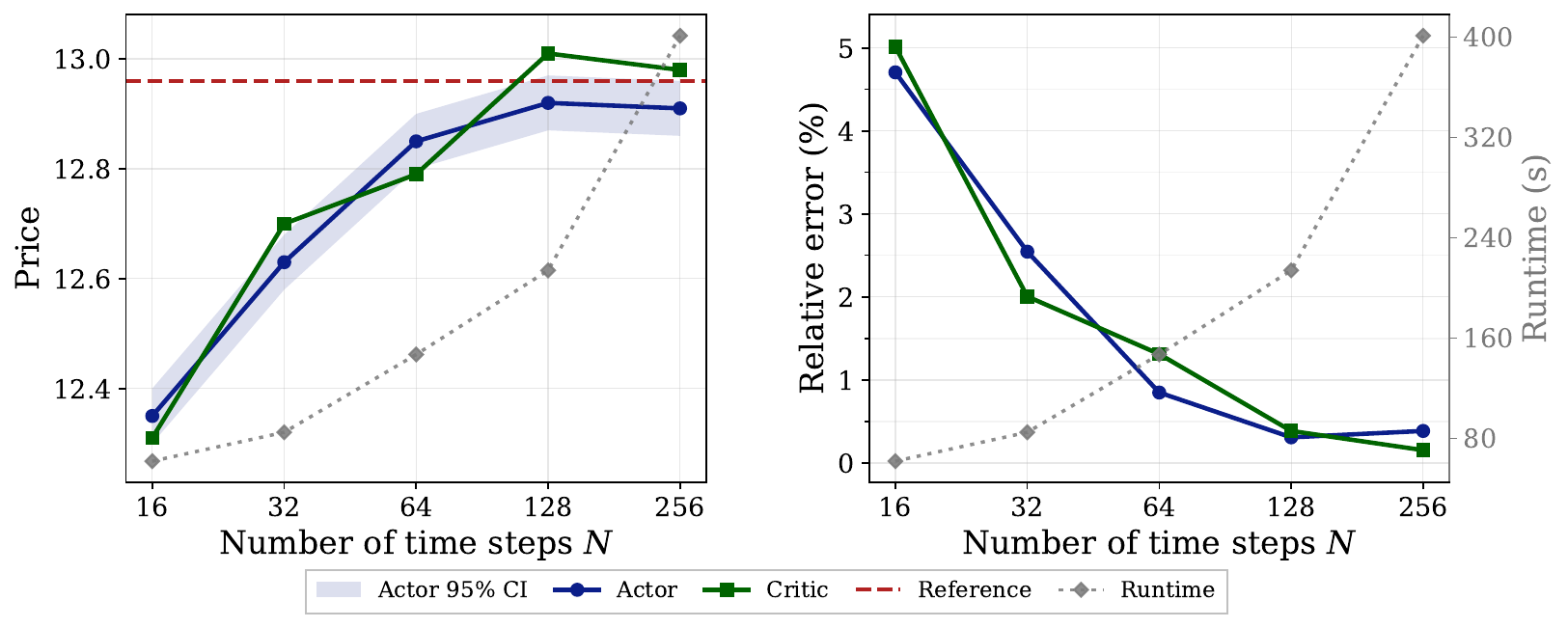}
\caption{Convergence with respect to $N$ for the 
geo-outperformer option ($d=3$, uncertain correlation).}
\label{fig:conv-app-2}
\end{figure}

\end{document}